\newcommand{\eps}{\varepsilon}
\newcommand{\N}{\ensuremath{\mathbb{N}}}
\newcommand{\R}{\ensuremath{\mathbb{R}}}
\newcommand{\Z}{\ensuremath{\mathbb{Z}}}
\newcommand{\Q}{\ensuremath{\mathbb{Q}}}
\newcommand{\E}{\ensuremath{\mathbb{E}}}
\renewcommand{\P}{\ensuremath{\mathbb{P}}}
\newcommand\ind[1]{\ensuremath{\mathbbm{1}_{\{#1\}}}}
\newcommand{\diff}{\mathop{}\mathopen{}\mathrm{d}}
\newtheorem{proposition}{Proposition}
\newtheorem{lemma}{Lemma}
\newtheorem{theorem}{Theorem}
\newtheorem{corollary}{Corollary}
\newcommand{\Var}{\mathop{}\mathopen{}\mathrm{Var}}
\newcommand\croc[1]{\left\langle #1\right\rangle}
\newcommand\cal{\mathcal}
\title{Asymptotics of Stochastic Protein Assembly Models} 
\author{Marie Doumic}
\email{Marie.Doumic@inria.fr}
\urladdr{https://team.inria.fr/mamba/marie-doumic/}
\author{Sarah Eug\`ene}
\email{Sarah.Eugene@inria.fr}
\author{Philippe Robert}
\email{Philippe.Robert@inria.fr}
\urladdr{http://team.inria.fr/rap/robert}
\address[M. Doumic, S. Eug\`ene, Ph. Robert]{INRIA Paris, 2 rue Simone Iff, F-75012 Paris, France}
\address[M. Doumic, S. Eug\`ene]{Sorbonne Universit\'es, UPMC Universit\'e Pierre et Marie Curie, UMR 7598, Laboratoire Jacques-Louis Lions, F-75005, Paris, France}
\date{\today}
\begin{document}

\begin{abstract}
Self-assembly of proteins  is a biological phenomenon which gives rise to  spontaneous formation of  {\em amyloid fibrils}  or {\em polymers}.   The  starting point  of this  phase, called \emph{nucleation} exhibits an important variability among replicated experiments.  To analyse the stochastic nature of this phenomenon, one of the simplest models  considers two populations of chemical components:  monomers and polymerised monomers. Initially there are only  monomers.  There are two reactions for the polymerization of a monomer: either two monomers collide to combine into two polymerised monomers or a monomer is  polymerised after the encounter of a polymerised monomer. It turns out that this simple model does not  explain completely the variability observed in the experiments. This paper investigates  extensions of this model  to  take into account other mechanisms of the polymerization process that may have impact an impact on fluctuations.  The first variant consists in introducing a preliminary conformation step to take into account the biological fact that, before being polymerised, a  monomer has two states,  regular or misfolded.  Only misfolded monomers can be polymerised so that the fluctuations of the number of misfolded monomers can be also a source of variability of the number of polymerised monomers.  The second variant, based on numerical considerations,  represents the reaction rate $\alpha$   of spontaneous formation of a polymer as of  the order of $N^{-\nu}$, for some large scaling variable $N$ representing the reaction volume and $\nu$ some positive constant.  Asymptotic results involving different time scales are obtained for the corresponding Markov processes. First and second order results for the starting instant of nucleation are derived from these limit theorems.  The proofs of the results rely on a study of a stochastic averaging principle for a model related to an Ehrenfest urn model, and also on a scaling analysis of a population model. 
\end{abstract}

\maketitle 

\hrule

\vspace{-3mm}

\tableofcontents

\vspace{-10mm}

\hrule

\bigskip

\section{Introduction}
Self-assembly of proteins  is an important biological phenomenon, on the one hand associated with human diseases such as Alzheimer's, Parkinson's, Huntington's diseases and still many others, and on the other hand involved in industrial processes, see McManus et al.~\cite{McManus_2016}  and Ow and Dustan~\cite{Ow_Protein2014}. The initial step of the chain reactions giving rise to amyloid fibrils  consists in the spontaneous formation of a so-called \emph{nucleus}, that is, the simplest possible polymer able to ignite the reaction. This early phase is called \emph{nucleation}, and is still far from being understood. As underlined by previous studies Szavits-Nossan et al.~\cite{Szavits}, the nucleation step is intrinsically stochastic, leading to an important variability among replicated experiments, not only in small volumes but even in relatively large ones, see Xue et al.~\cite{Radford}. The question of building convenient stochastic models, able to render out the heterogeneity  observed, and even to predict it, has recently raised much interest in the biological and biophysical community, see Szavits-Nossan et al.~\cite{Szavits}, Yvinec et al.~\cite{yvinec2016}, Pigolotti et al.~\cite{Pigolotti2013} and Eden et al.~\cite{Eden2015}.

We start with a simple  stochastic model,  proposed and studied in Eug\`ene et al.~\cite{EXRD} for which we consider extensions to get a deeper understanding on the intricate influence of each reaction considered. In Eug\`ene et al.~\cite{EXRD}, rigorous asymptotics of the simple model were proved, and it was fitted to the experimental data published in Xue et al~\cite{Radford}.  It was shown that  the predicted variability was much smaller by the model than what was experimentally obtained. One of the conclusions of this work is that other mechanisms had to be taken into account to explain the variability observed in the experiments.
We thus propose here two ways to complement the basic model. Let us first recall its definition.
\subsection{The Basic Model}
One of the simplest models to describe  the nucleation process considers two populations of chemical components: (regular) monomers and polymerised monomers. Initially there are only  monomers.  There are two reactions for the polymerization of a monomer: either two monomers collide to combine into two polymerised monomers or a monomer is  polymerised after the encounter of a polymerised monomer. The chemical reactions associated with the basic model can then be described as follows:
\begin{equation}\label{ChemBas}
\begin{cases}
{\cal X}_1+{\cal X}_1 \stackrel{{\alpha}}{\underset{}{\longrightarrow}} 2 {\cal X}_2,\\
{\cal X}_1+{\cal X}_2 \stackrel{{\beta}}{\underset{}{\longrightarrow}} 2 {\cal X}_2.
\end{cases}
\end{equation}
These reactions can be represented by the sample paths of a Markov process $(X_1^N(t),X_2^N(t))$, where $X_1^N(t)$ [resp. $X_2^N(t)$] is the number of regular [resp. polymerised]  monomers at time $t\geq 0$. 
The scaling variable $N$ which will be used should be thought of as the reaction volume. In particular $X_2^N(t)/N$ is the concentration of polymerised monomers at time $t$.  If $M_N$ is the initial number of monomers, it is assumed that the following regime
\begin{equation}\label{scaling}
\lim_{N\to+\infty}\frac{M_N}{N}=m
\end{equation}
holds for some $m>0$. The quantity $M_N/N$ is in fact the initial concentration of monomers. 

The transition rates of $(X_1^N(t),X_2^N(t))$ are given by, for $x=(x_1,x_2)\in\N^2$,
\begin{equation}\label{RateTM}
 x \mapsto 
\begin{cases}
x{+}({-}2,2) &\text{ at rate } \quad \alpha (x_1/N)^2\\
x{+}({-}1,1)  &\phantom{ at rate } \quad \beta x_1/N\times x_2/N.
\end{cases}
\end{equation}
The second coordinate $x_2$ is in fact the polymerized mass which explains the jumps of size $2$ in the reactions.
Note that the conservation of mass implies that the quantity $X_1^N(t){+}X_2^N(t)$ is constant and equal to $M_N$, the total number of initial monomers. 
\begin{enumerate}
\item[---] The first reaction of~\eqref{ChemBas} converts two monomers into two polymerised monomers. In our model, due to thermal noise in particular, these reactions will occur in a stochastic way.  Following the principles of the law of mass action,  the encounter of two chemical species  occurs  at a rate  proportional to the product of the \emph{concentrations} of each species.  Therefore two given monomers disappear to produce two polymerised monomers  at a rate $\alpha (x_1/N)^2$. 
\item[---] The second reaction can be seen as an auto-catalytic process. Here, given a monomer at the contact of a polymerised monomer, the monomer  is converted into a polymerised monomer at a rate ${\beta}$.  Again, by the law of mass action, regular monomers disappear at the rate $\beta(x_1/N)(x_2/N)$.
\end{enumerate}
See Eug\`ene et al.~\cite{EXRD}, Szavits-Nossan et al.~\cite{Szavits} and  Xue et al.~\cite{Radford}  for a general presentation of these phenomena in a biological context.  For more  discussion and  results on stochastic models associated to chemical reactions, see  for example  Anderson and Kurtz~\cite{AndersonKurtz} and Higham~\cite{Higham} and references therein.

This simple, intuitive model of polymerisation has the advantage of having only two parameters to determine.  It can be analyzed mathematically by standard tools of probability theory, see Eug\`ene et al.~\cite{EXRD}. It has been shown that if $X_2^N(t)$ is the number of polymerised monomers at time $t$, then the polymerisation process can be described  via the following convergence in distribution 
\begin{equation}\label{TMCV}
\lim_{N\to+\infty}\left(\frac{X_2^N(Nt)}{N}\right)=(x_2(t))
\end{equation}
holds, where $(x_2(t))$ is the non-trivial solution of the following simple ordinary differential equation 
\begin{equation}\label{eq:simple}
\dot{x}_2 (t) =\alpha \big(m-x_2(t)\big)^2 + \beta \big(m-x_2(t)\big)x_2(t)
\end{equation}
 converging to $m$ has $t$ goes to infinity. 

By using  these simple mathematical results and the data from experiments with 17 different concentrations of monomers (the value of $m$)  and 12  experiments for each concentration, Table~I of Eug\`ene et al.~\cite{EXRD} shows that, in this setting, the estimation of $\beta$ is reasonably robust. This is unfortunately not  the case for the numerical estimation of $\alpha$ which is varying from $1.68{\cdot}10^{-2}$ to $9.57{\cdot}10^{-8}$. An additional difficulty with this simple model comes from the small values of $\alpha$ obtained. Indeed, for the experiments, the value of the volume $N$ is in the order of $10^{15}$,  some of the estimated values of $\alpha$ in $10^{-8}$ are therefore, numerically,  of the order of $1/\sqrt{N}$. The asymptotic results are obtained when $N$ gets large and $\alpha$ {\em fixed}.  For this reason, one may suspect a problem of convergence speed in Relation~\eqref{TMCV} when these parameters are used. It turns out  that our simulations confirm that the asymptotic regime~\eqref{TMCV} does not seem to represent accurately the system when $\alpha$ is too small.

The purpose of the present paper is to refine this basic model in two different ways. 
\begin{enumerate}
\item The model can be improved by introducing a key feature of the polymerisation process: the misfolding of monomers.
Experiments show that monomers can be polymerised only if their $3$-D structure has been modified by some events. Such monomers are called misfolded monomers, see Dobson~\cite{Dobson,Dobson:2}, Knowles et al.~\cite{Knowles:2}. It turns out that, at a given time, only a small fraction of monomers are misfolded which may also explain that the polymerisation process starts very slowly. In biological cells, this phenomenon of misfolding is reversible, dedicated proteins may ``correct'' the misfolded monomers. A misfolded monomer can be turned into a ``regular'' monomer and vice-versa. See Bozaykut et al.~\cite{Boza} and Lanneau et al.~\cite{Lanneau} for example. Section~\ref{AlpSec} is devoted to the mathematical analysis of these models. 
\item Another approach is to keep the basic model but with the parameter $\alpha$ being of the order of $1/N^\nu$ for some positive $\nu$ to take into account that, in practice, the values of this parameter can be very small. Note that this is only a numerical observation, the value of $\alpha$ has no reason to depend on the volume. This model is analyzed in Section~\ref{AlpSec}. 
\end{enumerate}
The rest of the section is devoted to a brief sketch of the mathematical aspects of these two classes of models. As it will be seen, the models are more challenging from a mathematical point of view, the model with misfolded monomers in particular. 

\subsection{Models with Misfolding Phenomena}
The chemical reactions associated with this simple model are as follows:
\[
{\cal X}_0 \stackrel{\gamma}{\underset{\displaystyle\stackrel{\longleftarrow}{\gamma^*}}{\longrightarrow}}  {\cal X}_1,\qquad 
\begin{cases}
{\cal X}_1+{\cal X}_1 \stackrel{{\alpha}}{\underset{}{\longrightarrow}} 2 {\cal X}_2,\\
{\cal X}_1+{\cal X}_2 \stackrel{{\beta}}{\underset{}{\longrightarrow}} 2 {\cal X}_2. 
\end{cases}
\]
At time $t\geq 0$,  $X^N_0(t)$ denotes the number of regular  monomers,  $X^N_1(t)$ the number of misfolded monomers. As before the last coordinate $X^N_2(t)$ is the polymerized mass.  As a Markov process, $(X^N(t))=(X^N_0(t),X^N_1(t),X^N_2(t))$ has the following transitions, for an element $x=(x_0,x_1,x_2)\in\N^3$,
\begin{equation}\label{RateMF}
x \mapsto 
\begin{cases}
x{+}(1,{-}1,0) \text{ at rate } \gamma^*\,x_1\\
x{+}({-}1,1,0)  \phantom{ atsa rate } \gamma\,x_0,
\end{cases}
 x \mapsto 
\begin{cases}
x{+}(0,{-}2,2) & \alpha\,(x_1/N)^2\\
x{+}(0,{-}1,1)  & \beta\,x_1/N\times x_2/N.
\end{cases}
\end{equation}
It is important to note that the transition  between state ``0'', regular monomer, and state ``1'', misfolded monomer, is spontaneous. Consequently, as it can be seen, the corresponding transition rates {\em do not} depend on the volume $N$  but simply on the numbers of components and not on their concentrations. An important consequence of this observation is that the system  exhibits a two time scales behavior that we will investigate. 
\subsection*{An informal description of the asymptotic behavior of $(X_2^N(t))$}

The first two coordinates can be seen as an Ehrenfest process with two urns $0$ and $1$ where each particle in urn $0$ (resp. $1$) goes to urn $1$ (resp. $0$) at rate $\gamma$ (resp. $\gamma^*$).  See Bingham~\cite{Bingham} and Karlin and McGregor~\cite{Karlin} for example. Particles in urn $1$ can also go to the urn $2$ corresponding to the polymerized mass but this phenomenon occurs at a much slower rate so that, locally, it does not change the orders of magnitude in $N$ of $X_2^N$.

 When  $X_2^N{\sim}x_2N$, there is a total of $(m{-}x_2)N$ particles in the urns $0$ or $1$.  The  components $(X_0^N(t),X_1^N(t))$  are both of the order of $N$ and are moving on a fast time scale, proportional to $N$.  The transition rates of the process $(X_2^N(t))$ are slower, bounded with respect to $N$. Because of the fast transition  rates of the first two coordinates, the Ehrenfest urn process should reach quickly an equilibrium for which $X_0^N$ has a binomial distribution with parameter $(m-x_2)N$ and $r$ with $r={\gamma}/{(\gamma{+}\gamma^*)}$, in particular
\[
\frac{X_0^N}{N}\sim (1-r) (m-x_2)\text{ and } \frac{X_1^N}{N}\sim r(m-x_2).
\]
This suggests that, 
\begin{enumerate}
\item[a)] to see an evolution of $X_2^N$ of the order of $N$, one has to be on the linear time scale $t\mapsto N t$: transition rates of the process $X_2^N$ are $O(1)$), 
\item[b)] if $X_2^N(Nt)\sim x_2(t)N$, in view of transition rates of $(X_2^N(t))$ of Relation~\eqref{RateMF}, then  $(x_2(t))$ should satisfy the following ordinary differential equation
\begin{equation}\label{eq:intermediaire}
\dot{x}_2(t)=\alpha r^2(m-x_2(t))^2+\beta r(m-x_2(t))x_2(t). 
\end{equation}
\end{enumerate}
We recognize the limit equation~\eqref{eq:simple} of the simple model, where $\alpha$, $\beta$ are respectively replaced by $\alpha r^2$ and $\beta r$. This result is also true when considering the second order fluctuations of the number of polymers, see Theorem~\ref{CLTMF}. The proof of the convergence  of the process of the concentration of polymerized monomers to the solution of the ODE~\eqref{eq:simple} use standard arguments of convergence of a sequence of stochastic processes, see the supplementary material of Eug\`ene et al.~\cite{EXRD}.  The proof of the corresponding result  with misfolding phenomena for the ODE~\eqref{eq:intermediaire} is, as we shall see, more delicate to handle. 
\subsection*{Stochastic Averaging Phenomenon} To summarize these observations, the coordinates $(X_0^N(t),X_1^N(t))$ form a ``fast'' process and $(X_2^N(t))$ is a ``slow'' process when the scaling parameter $N$ goes to infinity.  This suggests a stochastic averaging principle (SAP) in a fully coupled context.
\begin{enumerate}
\item The stochastic evolution of $(X_2^N(Nt))$ is driven by the invariant distribution of an ``instantaneous'' associated Ehrenfest process.
\item The parameters of the Ehrenfest process depend on the macroscopic variable $(X_2^N(Nt))$.
\end{enumerate}
see Papanicolaou et al.~\cite{PSV} and Chapter~8 of Freidlin and Wentzell~\cite{Freidlin} for example, see also Kurtz~\cite{Kurtz}. 

A stochastic averaging principle is indeed proved as well as a corresponding central limit theorem (CLT). In our cases there are some differences with the ``classical'' framework of stochastic averaging principles. The state space of the fast process depends on the scaling parameter $N$, and is not in particular a ``fixed'' process  (with varying parameters) as it is usually the case. See Hunt and Kurtz~\cite{Hunt} or Sun et al.~\cite{Sun} for example.  A law of large numbers with respect to $N$ for the invariant distribution of the fast process is driving the evolution of the slow process. The approach used in the paper relies on the use of occupation measures on a continuous state space instead of a discrete space, this leads to some technical complications as it will be seen.  Concerning central limit theorems in a SAP context, there are few references available for jump processes. The methods presented in Kang et al.~\cite{KKP} or in Sun et al.~\cite{Sun} do not seem to be helpful in our case. Instead, an ad-hoc estimation, Proposition~\ref{propclt}, gives the main ingredient to derive a central limit theorem, 
see Section~\ref{AVGsec}. 

\subsection{Models with Scaled Reaction Rates}
Again, $X_1^N(t)$ (resp. $X_2^N(t)$) is the number of regular (resp. polymerised)  monomers at time $t\geq 0$.
The transition rates of the Markov process  $(X^N(t)){=}(X_1^N(t),X_2^N(t))$ associated to these models are the same, except that the parameter $\alpha$ is replaced by $\alpha/N^\nu$  with $0<\nu.$ For  $x=(x_1,x_2)\in\N^2$, the rates are given by 
\begin{equation}\label{RateSM}
 x \mapsto 
\begin{cases}
x{+}({-}2,2) &\text{ at rate } \quad  \alpha/N^\nu\,(x_1/N)^2\\
x{+}({-}1,1)  &\phantom{ at rate } \quad  \beta\,x_1/N\times x_2/N.
\end{cases}
\end{equation}
Convergence~\eqref{TMCV} shows that the polymerisation occurs on the linear time scale $t\mapsto Nt$ for the basic model. It will be shown that the phenomenon does not start on this time scale. A slightly more rapid time scale is necessary for this purpose, it is shown that it is on the time scale $t\mapsto N\log N{\cdot}t$ for $0<\nu\leq 1$ and $t\mapsto N^\nu t$ when $\nu{>}1$. See Section~\ref{AlpSec}.

\section{Stochastic Models with Misfolding Phenomena}\label{AVGsec}
The following notations will be used throughout the paper. For $\xi\geq 0$,   $\mathcal{N}_{\xi}(\diff t)$ denotes a Poisson process with  parameter $\xi$ and $(\mathcal{N}_{\xi}^i(\diff t))$ an i.i.d. sequence of such processes. All the Poisson processes are defined on a  probability space $(\Omega,{\cal F},\P)$.   If $f$ is a real valued function on $\R_+$, $f(t{-})$ denotes its limit on the left of $t{\geq}0$ when it exists.  Finally, $m^*$ denotes an upper bound for the sequence $(M_N{/}N)$ which converges to $m>0$ by Relation~\eqref{scaling}.

Recall that, at time $t\geq 0$ ,  $(X^N(t)){=}(X^N_0(t),X^N_1(t),X^N_2(t))$ where $X^N_0(t)$ is the number of  monomers,  $X^N_1(t)$ is the number of misfolded monomers and $X^N_2(t)$ is the polymerized mass. 
It is not difficult to see that these processes can be seen as the solution of the following stochastic differential equations, 
\begin{equation}\label{SDEMF}
\begin{cases}
\displaystyle\diff X_0^N(t) =  \sum_{i=1}^{X_1^N(t{-})} \mathcal{N}_{\gamma^*}^i(\diff t){-}\sum_{i=1}^{X_0^N(t{-})} \mathcal{N}_{\gamma}^i(\diff t), \\
\displaystyle\diff X_2^N(t) =  2 \sum_{i=1}^{{X_1^N(t-)(X_1^N{-}1)(t{-})}/{2}} \mathcal{N}_{{\alpha}/{N^2}}^i(\diff t){+}\sum_{i=1}^{X_1^N(t{-})X_2^N(t-)} \mathcal{N}_{{\beta}/{N^2}}^i(\diff t), 
\end{cases}
\end{equation}
with the relation of conservation of mass $M_N{=}X_0^N(t){+}X_1^N(t){+}X_2^N(t)$ and initial condition $X^N(0){=}(M_N,0,0)$. 

Equation~\eqref{SDEMF} gives in particular that
\begin{multline}\label{SDEX2}
X_2^N(t)=X_2^N(0)+\frac{\alpha}{N^2}\int_0^t X_1^N(s)(X_1^N(s){-}1)\,\diff s\\+\frac{\beta}{N^2}\int_0^t X_1^N(s) X_2^N(s)\,\diff s+M_2^N(t),
\end{multline}
where $(M_2^N(t))$ is a martingale whose previsible increasing process is given by
\begin{equation}\label{X2croc}
\croc{M_2^N}(t)=2\frac{\alpha}{N^2}\int_0^t X_1^N(s)(X_1^N(s){-}1)\,\diff s+\frac{\beta}{N^2}\int_0^t X_1^N(s) X_2^N(s)\,\diff s
\end{equation}

For $i=0$, $1$, $2$ and $t\geq 0$, denote
\[
\overline{X}_i^N(t)=\frac{X_i^N(Nt)}{N},
\]
the main goal of this section is to prove that the process $(\overline{X}_2^N(t))$ is converging in distribution to the solution $(x_2(t))$ of a non-trivial ordinary differential equation. It will show in particular that the polymerization process is occurring on the linear time scale $t\mapsto N t$.

\subsection{Random Measures Associated to Occupation Times}
Define $\mu_N$ the random measure on $\R_+^3$ by
\[
\croc{\mu_N,g}=\int_{\R_+} g\left(\overline{X}_0^N(Nu),\overline{X}_1^N(Nu),u\right)\,\diff u.
\]
\begin{proposition}\label{propmu}
The sequence $(\mu_N)$ is tight. Any limiting point $\mu_\infty$ of this sequence is such that
\begin{equation}\label{eqrep}
\croc{\mu_\infty,g}=\int_{\R_+^3} g\left(x,y,u\right)\,\pi_u(\diff x,\diff y)\diff u,
\end{equation}
for any continuous function $g$ on $[0,m^*]^2\times [0,T]$, where for each $u\geq 0$, $\pi_u$ is a random Radon measure on $\R_+^2$. 
\end{proposition}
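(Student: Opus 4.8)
The plan is to establish tightness of the sequence of random measures $(\mu_N)$ on the space of Radon measures on $\R_+^3$ (equipped with the topology of vague convergence), and then to identify the structure of any subsequential limit. For tightness, the key observation is that $\mu_N$ is essentially a deterministic time marginal times an occupation measure: for any fixed $T>0$, the restriction of $\mu_N$ to $[0,m^*]^2\times[0,T]$ has total mass $T$, since $\overline{X}_0^N$ and $\overline{X}_1^N$ are bounded by $M_N/N\le m^*$ by conservation of mass, and the $u$-marginal is just Lebesgue measure on $[0,T]$. Thus each $\mu_N$ lives, after restriction, in a fixed compact set $[0,m^*]^2\times[0,T]$ with fixed total mass $T$; the space of such (sub)probability measures is compact for the weak topology by Prohorov, so tightness of $(\mu_N)$ is immediate and any subsequence has a convergent sub-subsequence. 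A diagonal argument over $T\in\N$ upgrades this to a limit point $\mu_\infty$ on all of $\R_+^3$.

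Next I would identify $\mu_\infty$. The crucial rigidity is that the third coordinate of $\mu_N$ is, by construction, deterministic: $\croc{\mu_N,g} = \int_{\R_+} g(\overline X_0^N(Nu),\overline X_1^N(Nu),u)\diff u$, so for any $g$ of product form $g(x,y,u)=h(x,y)\varphi(u)$ one has $\croc{\mu_N,h\otimes\varphi}=\int \varphi(u)\left[\int h\,\diff(\text{point mass at }(\overline X_0^N(Nu),\overline X_1^N(Nu)))\right]\diff u$, which makes the $u$-marginal of $\mu_N$ exactly Lebesgue measure for every $N$. Passing to the limit, the $u$-marginal of $\mu_\infty$ is Lebesgue measure on $\R_+$. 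A measure on $[0,m^*]^2\times\R_+$ whose projection on the last coordinate is Lebesgue admits a disintegration $\mu_\infty(\diff x,\diff y,\diff u)=\pi_u(\diff x,\diff y)\,\diff u$ where $(\pi_u)_{u\ge0}$ is a measurable family of (random, since $\mu_\infty$ is a random measure) sub-probability measures on $[0,m^*]^2$; this is the standard measure-disintegration theorem applied $\omega$ by $\omega$, with measurability in $\omega$ handled by a monotone-class / approximation argument. That gives precisely Relation~\eqref{eqrep}, and restricting to continuous $g$ on $[0,m^*]^2\times[0,T]$ is harmless since those are dense and determine the measure.

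The main obstacle I anticipate is not the tightness, which is essentially free from the a priori bounds, but rather the measurability and joint-regularity bookkeeping in the disintegration step: one must check that $\mu_\infty$ is a bona fide random measure (measurable as a map from $\Omega$ into the Polish space of Radon measures), that the conditional kernels $\pi_u$ can be chosen jointly measurable in $(\omega,u)$, and that the identity $\croc{\mu_\infty,g}=\int g\,\pi_u\diff u$ holds simultaneously for all continuous $g$ (not just $\P$-a.s. for each fixed $g$)---this requires separability of $C([0,m^*]^2\times[0,T])$ and a continuity argument. A secondary technical point, flagged by the authors' remark about "occupation measures on a continuous state space," is that one should be slightly careful that no mass escapes to infinity in the $u$-direction when gluing the restrictions across growing $T$; but since the $u$-marginal is exactly Lebesgue for every $N$, this is automatic. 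I would present the tightness and the $u$-marginal identification first, then invoke disintegration, deferring the delicate joint-measurability assertions to a short lemma or a reference to, e.g., Dellacherie--Meyer or Kallenberg on random measures.
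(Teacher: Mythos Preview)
Your proposal is correct, and the tightness step matches the paper's argument (compact support in $[0,m^*]^2\times[0,T]$ with fixed total mass, hence Prohorov; the paper cites Dawson's Lemma~3.2.8 for the same conclusion). The disintegration step is where you genuinely diverge. You observe that the $u$-marginal of every $\mu_N$, hence of $\mu_\infty$, is Lebesgue measure, and then invoke the abstract disintegration theorem $\omega$-by-$\omega$, postponing the joint $(\omega,u)$-measurability of $\pi_u$ to a monotone-class argument. The paper instead \emph{constructs} $\pi_u$ explicitly: after Skorohod representation, for each $h\in C([0,m^*]^2)$ it sets
\[
\tilde{\pi}_u(h)=\limsup_{\eps\to 0}\frac{1}{\eps}\,\croc{\mu_\infty,\,h\otimes\ind{[u-\eps/2,\,u+\eps/2]}}
\]
via the Lebesgue differentiation theorem, which makes $(\omega,u)\mapsto\tilde{\pi}_u(h)(\omega)$ measurable automatically as a $\limsup$ of measurable functions; it then upgrades from a countable dense family of $h$'s to a genuine Radon measure $\pi_u$ via a Carath\'eodory-type extension from Rogers--Williams. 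Your route is conceptually cleaner but leaves precisely the point you yourself flag---joint measurability of the kernel---as the remaining work; the paper's route is more hands-on but resolves that point constructively, at the price of the extension machinery. The paper in fact notes, right after the proof, that Kurtz's Lemma~1.4 (based on Morando's bi-measure extension) gives yet another path to the same representation, so your appeal to a disintegration/random-measure reference such as Dellacherie--Meyer or Kallenberg is in the same spirit.
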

\begin{proof}
Since $\overline{X}_0^N(t)$ and $\overline{X}_1^N(t)$ are bounded, for any $T>0$, the measure $\mu_N$ restricted to the set $\R_+^2\times[0,T]$ has a compact support.  Lemma~3.2.8~page~44 of~Dawson~\cite{Dawson} gives directly that the sequence $(\mu_N)$ of random measure on $\R_+^3$ is tight. 

 Let $(\mu_{N_k})$ be a convergent subsequence with limit $\mu_\infty$. By using Skorohod's representation theorem, one can assume that there exists a negligible measurable set ${\cal A}$ of the probability space such that, outside this subset, the convergence of the sequence $(\mu_{N_k})$ of Radon measures towards $\mu_\infty$, that is 
\[
\lim_{k\to+\infty} \croc{\mu_{N_k},g} =\croc{\mu_\infty,g} \text{ for all } g\in C([0,m^*]^2\times[0,T]), 
\]
holds. 

Let $h\in C([0,m^*]^2)$ and $f\in C([0,T]$, denoting $h\otimes f (x,y,u)=h(x,y)f(u)$, for $(x,y)\in[0,m^*]^2$ and $u\in[0,T]$, then, as a limit of the sequence $(\mu_{N_k})$,  the Radon measure
\[
f\mapsto \croc{\mu_N,h\otimes f}
\] 
is absolutely continuous with respect to Lebesgue's measure. Consequently, for any $h\in C([0,m^*]^2)$, there exists some function $(\tilde{\pi}_u(h),0\leq u\leq T)$ such that
\[
\croc{\mu_\infty,h\otimes f}=\int_0^T\tilde{\pi}_u(h)f(u)\,\diff u.
\]
By the differentiation theorem, see Theorem~7.10 in Rudin~\cite{Rudin}, the function $(\tilde{\pi}_u(h))$ can be represented as 
\[
\tilde{\pi}_u(h)=\limsup_{\eps\to 0}\frac{1}{\eps} \croc{\mu_\infty,h\otimes \ind{[u-\eps/2,u+\eps/2]}}, \quad u\in[0,T],
\]
consequently, the mapping $(\omega,u)\mapsto \tilde{\pi}_u(h)(\omega)$ is ${\cal F}\otimes {\cal B}([0,T])$-measurable. 

Let ${\cal S}$ be a countable dense subset of $C([0,m^*]^2)$, then there exists a subset $E_0$ of $[0,T]$ negligible for the Lebesgue measure such that, for all $u\in[0,T]\setminus E_0$ and $\phi_1$, $\phi_2\in {\cal S}$, 
\begin{enumerate}
\item $\tilde{\pi}_u(p_1\phi_1 + p_2\phi_2)=p_1\tilde{\pi}_u(\phi_1) + p_2\tilde{\pi}_u(\phi_2)$, $\forall p_1$, $p_2\in\Q$,
\item $\tilde{\pi}_u(\phi_1)\leq \tilde{\pi}_u(\phi_2)$ if $\phi_1\leq \phi_2$,
\item $\tilde{\pi}_u(1)=1$.
\end{enumerate}
With the same method as in Section~II.88 of Rogers and Williams~\cite{Rogers},  for any $u\in[0,T]\setminus E_0$ , one gets the existence of a Radon measure $\pi_u$ on $[0,m^*]^2$ such that
$\tilde{\pi}_u(h)=\pi_u(h)$ for any $h\in {\cal S}$. By density of ${\cal S}$, the mapping $(\omega,u)\mapsto {\pi}_u(h)(\omega)$ is  also ${\cal F}\otimes {\cal B}([0,T])$-measurable and the relation
\[
\croc{\mu_\infty,h\otimes f}=\int_0^T{\pi}_u(h)f(u)\,\diff u.
\]
holds for all $h\in C([0,m^*]^2)$ and $f\in C([0,T]$. The proposition is therefore proved. 
\end{proof}
Representation~\eqref{eqrep} is related to Lemma~1.4 of Kurtz~\cite{Kurtz}.   Our proof relies on classical arguments of measure theory, a functional version of Carath\'eodory's extension theorem in particular which is  described in Section~II.88 of Rogers and Williams~\cite{Rogers}. In Kurtz~\cite{Kurtz},  a more sophisticated result,  see Morando~\cite{Morando}, on the extension of bi-measures is the key ingredient. The notion of bi-measure goes back to Kingman, see Dellacherie and Meyer~\cite{DM} for example. It should be mentioned that  Lemma~1.4 of Kurtz~\cite{Kurtz} gives also additional measurability properties of the family $(\pi_u)$ which are of no use in our case.
\begin{proposition}\label{proppi}
If  $\mu_\infty$ is a limiting point of $(\mu_N)$ with the representation~\eqref{eqrep}  then, for any  ${\cal C}^1$-function $f$ on $\R_+^2$, almost surely
\begin{equation}\label{idav}
\int_0^t \int_{\R_+^2} \left(\gamma^*y-\gamma x\right)\left(\frac{\partial}{\partial x}f(x,y)-\frac{\partial}{\partial y}f(x,y)\right) \pi_u(\diff x,\diff y)\,\diff u=0, \quad \forall t\geq 0,
\end{equation}
in particular, almost surely,
\begin{equation}\label{idSAP}
\int_0^t \int_{\R_+^2}\left(\gamma^*y-\gamma x\right)^2 \pi_u(\diff x,\diff y)\,\diff u=0, \quad \forall t\geq 0. 
\end{equation}
\end{proposition}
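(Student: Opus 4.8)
The plan is to run Dynkin's formula along the \emph{fast}, time-accelerated pair $u\mapsto(\overline{X}_0^N(Nu),\overline{X}_1^N(Nu))$ and to exploit a mismatch of orders of magnitude: a $\mathcal C^1$ test function $f$ is bounded, whereas on this time scale the generator applied to $f$ is of order $N^2$, so after dividing the resulting identity by $N^2$ the generator term must converge to $0$ — and this limit relation is exactly~\eqref{idav} once the representation~\eqref{eqrep} is used. Since, by conservation of mass, $\overline{X}_0^N(t),\overline{X}_1^N(t)\in[0,m^*]$ and every limiting $\mu_\infty$, hence every $\pi_u$, is carried by $[0,m^*]^2$, I would first multiply $f$ by a smooth compactly supported cutoff equal to $1$ on a neighbourhood of $[0,m^*]^2$, so that one may assume $f,\partial_xf,\partial_yf$ bounded and uniformly continuous without affecting~\eqref{idav}. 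With this reduction, \eqref{idSAP} is just the case $f=f_0$, $f_0(x,y)=-(\gamma^*y-\gamma x)^2/\bigl(2(\gamma+\gamma^*)\bigr)$, for which $\partial_xf_0-\partial_yf_0=\gamma^*y-\gamma x$, so that the integrand in~\eqref{idav} becomes $(\gamma^*y-\gamma x)^2$ on $[0,m^*]^2$.

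Next I would write Dynkin's formula on the fast scale. The map $u\mapsto(\overline{X}_0^N(Nu),\overline{X}_1^N(Nu))$ is a pure-jump Markov process whose generator $\mathcal A^N$ comes from the rates in~\eqref{RateMF} through the time change; applying it to $f$ and Taylor-expanding at the jumps — of size $1/N$ for the Ehrenfest transitions and of size $1/N$ or $2/N$ and rate $O(1)$ for the polymerisation ones, since $(x_1/N)^2$ and $(x_1/N)(x_2/N)$ are $O(1)$ — gives, uniformly on $[0,m^*]^2$,
\[
\mathcal A^N f(x,y)=N^2\bigl(G(x,y)+\eps_N(x,y)\bigr),\qquad G(x,y):=(\gamma^*y-\gamma x)\Bigl(\frac{\partial f}{\partial x}-\frac{\partial f}{\partial y}\Bigr)(x,y),
\]
with $\sup_{[0,m^*]^2}|\eps_N|\to0$, the error coming from the second-order terms of the expansion (controlled by uniform continuity of $\partial f$) and from the slow polymerisation transitions. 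Hence, by Dynkin's formula and the definition of $\mu_N$, the difference $f(\overline{X}_0^N(Nu),\overline{X}_1^N(Nu))-f(\overline{X}_0^N(0),\overline{X}_1^N(0))$ equals $N^2\croc{\mu_N,G\otimes\mathbf 1_{[0,u]}}+N^2\int_0^u\eps_N(\overline{X}_0^N(Nv),\overline{X}_1^N(Nv))\diff v+M^N(u)$ for some martingale $M^N$. The left-hand side is bounded by $2\|f\|_\infty$, while the previsible increasing process of $M^N$, namely $\int_0^u(\text{rate})\times(\text{jump of }f)^2$ with jumps $O(1/N)$ and total rate $O(N^3)$ for the Ehrenfest part and $O(N^2)$ for the polymerisation part, is $O(Nu)$; in particular $\Var(M^N(u))=O(Nu)$.

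Dividing the last identity by $N^2$ then shows that $\croc{\mu_N,G\otimes\mathbf 1_{[0,u]}}\to0$ in $L^2$, hence in probability, for each fixed $u\geq0$: the boundary term is $O(1/N^2)$, the rescaled martingale is $O(\sqrt{Nu}/N^2)$ in $L^2$, and the $\eps_N$-integral is bounded by $u\sup|\eps_N|\to0$. To pass to the limit on the left I would take a subsequence with $\mu_{N_k}\to\mu_\infty$, realized almost surely by Skorohod's representation as in the proof of Proposition~\ref{propmu}; since $G\otimes\mathbf 1_{[0,u]}$ is discontinuous in the time variable, I would sandwich $\mathbf 1_{[0,u]}$ between continuous functions $\phi^-_\delta\leq\mathbf 1_{[0,u]}\leq\phi^+_\delta$ tending pointwise to it, split $G=G^+-G^-$ into its positive and negative parts, and let first $k\to\infty$ and then $\delta\to0$ (dominated convergence, using the absolute continuity in $u$ in~\eqref{eqrep} and the finiteness of $\mu_\infty$ on $[0,m^*]^2\times[0,T]$), to get $\croc{\mu_{N_k},G\otimes\mathbf 1_{[0,u]}}\to\int_0^u\int_{\R_+^2}G(x,y)\,\pi_v(\diff x,\diff y)\,\diff v$. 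Combining the two limits yields $\int_0^u\int_{\R_+^2}G(x,y)\,\pi_v(\diff x,\diff y)\,\diff v=0$ almost surely for each fixed $u$; since this integral is continuous in $u$, the exceptional set can be chosen independent of $u$ along a countable dense set of values, and recalling the definition of $G$ this is~\eqref{idav}. The first paragraph then gives~\eqref{idSAP}.

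The main obstacle is the estimate of the last paragraph together with the choice of acceleration: one has to check that, once the Ehrenfest dynamics has been sped up enough to relax to equilibrium, the boundary term and the martingale are genuinely of smaller order than the $N^2$ standing in front of the generator — this is precisely the averaging mechanism that forces the occupation measure of the fast pair onto the set $\{\gamma^*y=\gamma x\}$. A secondary, purely technical point is the handling of the discontinuous time-indicator when passing to the limit, which is where the absolute-continuity structure of $\mu_\infty$ established in Proposition~\ref{propmu} is used.
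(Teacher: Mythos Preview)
Your approach is essentially the paper's: write Dynkin's formula for $f$ applied to the fast pair, observe that after dividing by the appropriate power of $N$ the boundary term, the slow polymerisation terms, and the martingale (controlled via its previsible increasing process and Doob's inequality) all vanish, forcing the generator integral $\croc{\mu_N,G\otimes\mathbf 1_{[0,u]}}$ to converge to $0$, and then pass to the limit using the representation~\eqref{eqrep}. The only substantive difference is your test function for~\eqref{idSAP}: the paper takes $f(x,y)=\gamma^*y^2-\gamma x^2$ (which, as written, actually yields $(\gamma^*y-\gamma x)(\gamma^*y+\gamma x)$ rather than $(\gamma^*y-\gamma x)^2$ and appears to be a sign typo for $\gamma x^2+\gamma^*y^2$), whereas your choice $f_0(x,y)=-(\gamma^*y-\gamma x)^2/(2(\gamma+\gamma^*))$ gives $\partial_xf_0-\partial_yf_0=\gamma^*y-\gamma x$ directly and is cleaner.
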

Relation~\eqref{idSAP} just says that almost surely and for almost all $u$, the measure $\pi_u$ is degenerated on $\R_+^2$ and carried by the subset$\{(x,\gamma x/\gamma^*):0\leq x\leq m\}$. 
\begin{proof}
for $(i,j)\in\Z^2$, one denotes  by $\Delta_{ij}$ the discrete differential operator
\[
\Delta^N_{ij}(f)(x,y)= f(x+i/N,y+j/N)-f(x,y), \quad (x,y)\in[0,m^*]^2. 
\]
After some trite calculations, the stochastic differential equations~\eqref{SDEMF} give the relation
\begin{multline}\label{eqaux1}
f\left(\overline{X}^N(t/N)\right) =f\left(\overline{X}^N(0)\right)+  \gamma\int_0^t X_0^N(s) \Delta^N_{-1,1}(f)\left(\overline{X}^N(s/N)\right) \,\diff s\\
+ \gamma^*\int_0^t X^N_1(s) \Delta^N_{1,-1}(f)\left(\overline{X}^N(s/N)\right)\, \diff s\\
+ \alpha \int_0^t \frac{X^N_1(s)(X^N_1(s)-1)}{2N^2} \Delta^N_{0,-2}(f)\left(\overline{X}^N(s/N)\right)\, \diff s\\
+\beta \int_0^t \frac{X^N_1(s)}{N}\frac{X^N_2(s)}{N}  \Delta^N_{0,-1}(f)\left(\overline{X}^N(s/N)\right)\, \diff s+M_f^N(t),
\end{multline}
where $(\overline{X}^N(t))=({{X}_0^N(Nt)/N},{{X}_1^N(Nt)/N})$ and $(M_f^N(t))$ is the  associated martingale. Its previsible increasing process is given by
\begin{multline}\label{eqaux2}
\croc{M_f^N}(t)=\gamma\int_0^t X_0^N(s) \Delta^N_{-1,1}(f)^2\left(\overline{X}^N(s/N)\right) \,\diff s\\
+ \gamma^*\int_0^t X_1(s) \Delta^N_{1,-1}(f)^2\left(\overline{X}^N(s/N)\right)\, \diff s\\
+ \alpha \int_0^t \frac{X_1^N(s)(X_1^N(s)-1)}{2N^2}\Delta^N_{0,-2}(f)^2\left(\overline{X}^N(s/N)\right)\, \diff s\\
+\beta \int_0^t \frac{X^N_1(s)}{N}\frac{X^N_2(s)}{N}\Delta^N_{0,-1}(f)^2\left(\overline{X}^N(s/N)\right)\, \diff s.
\end{multline}
Note that, for $i$, $j\in\Z$
\[
\Delta_{i,j}^N(f)(x,y)=\frac{1}{N}\left(i\frac{\partial f}{\partial x}(x,y)+j\frac{\partial f}{\partial y}(x,y)\right)+o(1/N),
\]
by changing the time variable in $Nt$ in Equation~\eqref{eqaux1} and by dividing by $N$ one gets the relation
\begin{multline}\label{eqaux3}
\frac{1}{N}\left(f\left( \overline{X}^N(t)\right) -f\left( \overline{X}^N(0)\right)\right)\\
= \int_0^t\left[  \gamma^*\overline{X}_1^N(s)-\gamma \overline{X}_0^N(s) \right]  \left[\frac{\partial f}{\partial x}-\frac{\partial f}{\partial y}\right]\left(\overline{X}^N(s)\right)\,\diff s\\
- \frac{\alpha}{N} \int_0^t \overline{X}^N_1(s)\left(\overline{X}^N_1(s)-1/N\right) \frac{\partial f}{\partial y}\left(\overline{X}^N(s)\right)\, \diff s\\
-\frac{\beta}{N} \int_0^t \overline{X}^N_1(s)\overline{X}^N_2(s) \frac{\partial f}{\partial y}\left(\overline{X}^N(s)\right)\, \diff s+\frac{M_f^N(Nt)}{N}+o(1/N),
\end{multline}
with $(\overline{X}^N(t)=(\overline{X}^N_0(t),\overline{X}^N_1(t))$.
The previsible increasing process of the martingale $(M_f^N(Nt)/N)$ in the above expression is $(\langle M_f^N\rangle(Nt)/N^2)$. By using Equation~\eqref{eqaux2} and the fact that $(\overline{X}_i^N(t))$ is bounded for $i=0$ and $1$, it is not difficult to show that its expected value converges to $0$ as $N$ gets large and, by Doob's Inequality, that the martingale  converges in distribution to $0$.  With similar arguments, from Equation~\eqref{eqaux3}, one gets therefore the following convergence in distribution
\begin{equation}\label{eqaux4}
\lim_{N\to+\infty} \left(\int_0^t\left[  \gamma^*\overline{X}_1^N(s)-\gamma \overline{X}_0^N(s) \right]  \left[\frac{\partial f}{\partial x}-\frac{\partial f}{\partial y}\right]\left(\overline{X}^N(s)\right)\,\diff s\right)=0.
\end{equation}
For $t\geq 0$, 
\begin{align*}
\int_0^t\left[  \gamma^*\overline{X}_1^N(s)-\gamma \overline{X}_0^N(s) \right]&\left[\frac{\partial f}{\partial x}-\frac{\partial f}{\partial y}\right]\left(\overline{X}^N(s)\right)\,\diff s\\
&=\int \left[  \gamma^*y-\gamma x \right]  \left[\frac{\partial f}{\partial x}-\frac{\partial f}{\partial y}\right]\left(x,y\right)\ind{s\leq t}\,\mu_N(\diff x,\diff y, \diff s),
\end{align*}
and this last term converges in distribution to
\begin{multline*}
\int \left[  \gamma^*y-\gamma x \right]  \left[\frac{\partial f}{\partial x}-\frac{\partial f}{\partial y}\right]\left(x,y\right)\ind{u\leq t}\,\mu_\infty(\diff x,\diff y, \diff u)
\\=\int_0^t \int_{\R_+^2} \left(\gamma^*y-\gamma x\right)\left(\frac{\partial}{\partial x}f(x,y)-\frac{\partial}{\partial y}f(x,y)\right) \pi_s(\diff x,\diff y)\,\diff s.
\end{multline*}
This convergence in distribution also holds for any finite marginals of this process. The convergence of processes~\eqref{eqaux4} gives therefore the desired identity~\eqref{idav} in distribution. 
The last assertion of the proposition is proved by taking the function $f(x,y)=\gamma^*y^2-\gamma x^2$. 
\end{proof}
\subsection{A Stochastic Averaging Principle}
Relation~\eqref{SDEX2} gives the following integral equation for $(\overline{X}_2^N(t))$, 
\begin{multline}\label{SDEX22}
\overline{X}_2^N(t))=\overline{X}_2^N(0)+\alpha\int_0^t \overline{X}^N_1(s)\left(\overline{X}_1^N(s){-}1/N\right)\,\diff s\\+\beta\int_0^t \overline{X}_1^N(s) \overline{X}_2^N(s)\,\diff s+\frac{M_2^N(Nt)}{N},
\end{multline}
The expected value  of the previsible increasing process of the martingale converges ${M_2^N(Nt)}{N})$ is vanishing  as $N$ gets large by Equation~\eqref{X2croc}.   Doob's Inequality shows that the martingale  converges in distribution to $0$. The criteria of the modulus of continuity, see Billingsley~\cite{Billingsley}, gives therefore that  the sequence of processes $(\overline{X}_2^N(t))$ is tight.  It can therefore be assumed, for some  subsequence $(N_k)$, that the following convergence holds,
\[
\lim_{k\to+\infty}\left(\mu_{N_k},\left(\overline{X}_2^{N_k}(t)\right)\right)=(\mu_\infty,(x_2(t)))
\]
for a random measure $\mu_\infty$ as in Proposition~\ref{propmu} and some continuous stochastic process $(x_2(t))$. The rest of the section is devoted to the identification of $(x_2(t))$. 
\begin{proposition}
For any continuous function $g$ on $[0,m^*]^2$, the relation
\[
\left(\int_0^t g(x,y)\mu_\infty(\diff x,\diff y)\,\diff u\right)\stackrel{\text{dist.}}{=}\left(\int_0^t g\left((m{-}x_2(u))(1{-}r,r)\right)\,\diff u\right)
\]
holds, with $r=\gamma/(\gamma+\gamma^*)$.
\end{proposition}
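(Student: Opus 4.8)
The plan is to combine the two structural facts already at our disposal: on one hand, Proposition~\ref{proppi} tells us that any limiting occupation measure $\mu_\infty$ disintegrates as $\pi_u(\diff x,\diff y)\diff u$ with $\pi_u$ carried, for almost every $u$, by the line $\{(x,\gamma x/\gamma^*)\}$; on the other hand, conservation of mass links the total number of particles in urns $0$ and $1$ to the polymerized mass. First I would make the mass-conservation statement quantitative at the level of the rescaled processes: since $M_N=X_0^N(t)+X_1^N(t)+X_2^N(t)$ for all $t$, dividing by $N$ and evaluating at time $Nu$ gives $\overline X_0^N(Nu)+\overline X_1^N(Nu)+\overline X_2^N(Nu)=M_N/N$. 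Hence, along the subsequence $(N_k)$ for which $(\mu_{N_k},(\overline X_2^{N_k}(t)))\to(\mu_\infty,(x_2(t)))$, and using Skorohod's representation as in the proof of Proposition~\ref{propmu}, one obtains that the support of $\pi_u$ is contained in the set $\{(x,y):x+y=m-x_2(u)\}$ for almost every $u$ — more precisely, testing $\mu_N$ against $g(x,y)\ind{s\le t}$ with $g$ continuous and passing to the limit shows that $\int_0^t\int (x+y)\,\pi_u(\diff x,\diff y)\diff u=\int_0^t(m-x_2(u))\,\pi_u(\R_+^2)\diff u$ once one knows $\pi_u$ is a probability measure (item (3) in the proof of Proposition~\ref{propmu}).

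The second step is to intersect the two constraints. For almost every $u$, $\pi_u$ is a probability measure supported on the intersection of the line $\{y=\gamma x/\gamma^*\}$ with the hyperplane $\{x+y=m-x_2(u)\}$; this intersection is the single point
\[
\left((m-x_2(u))\frac{\gamma^*}{\gamma+\gamma^*},\ (m-x_2(u))\frac{\gamma}{\gamma+\gamma^*}\right)=(m-x_2(u))(1-r,r),
\]
with $r=\gamma/(\gamma+\gamma^*)$. Therefore $\pi_u=\delta_{(m-x_2(u))(1-r,r)}$ for almost every $u$. Substituting this Dirac mass into the representation~\eqref{eqrep} gives, for any continuous $g$ on $[0,m^*]^2$ and any $t\ge0$,
\[
\int_0^t\int_{\R_+^2} g(x,y)\,\pi_u(\diff x,\diff y)\,\diff u=\int_0^t g\big((m-x_2(u))(1-r,r)\big)\,\diff u,
\]
which is exactly the claimed identity in distribution (the distributional equality being inherited from the joint convergence along $(N_k)$).

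There is one genuine point requiring care, and it will be the main obstacle: justifying the passage to the limit $\int_0^t\int(x+y)\,\pi_u\,\diff u=\lim_k\int_0^t\int(x+y)\,\mu_{N_k}$ and the identification $\pi_u(\R_+^2)=1$ for a.e.\ $u$. The first is handled exactly as the analogous limit in the proof of Proposition~\ref{proppi}: one writes $\int_0^t\int(x+y)\ind{s\le t}\,\mu_N(\diff x,\diff y,\diff s)$, notes that $(x,y)\mapsto x+y$ is continuous and bounded on the compact $[0,m^*]^2$, and uses that $\overline X_i^N(Nu)$ are uniformly bounded together with the convergence $\overline X_i^N(Nu)=M_N/N-\overline X_j^N(Nu)-\overline X_2^N(Nu)$ reconstructed from the convergence of $(\overline X_2^{N_k})$ — strictly, one uses that $\croc{\mu_N,(x+y)\otimes f}=\int f(u)(M_N/N-\overline X_2^N(Nu))\,\diff u\to\int f(u)(m-x_2(u))\,\diff u$ by dominated convergence, which simultaneously identifies $\croc{\mu_\infty,(x+y)\otimes f}$ and, by the disintegration, forces $\int(x+y)\pi_u\,\diff u=(m-x_2(u))\diff u$. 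Combined with $\pi_u$ being a probability measure, this already pins down the first moment $m-x_2(u)$; intersecting with the degeneracy from~\eqref{idSAP} then removes all remaining freedom. A minor additional subtlety is that $g$ in the statement is only assumed continuous on $[0,m^*]^2$, so one should first extend it to a continuous function on $\R_+^2$ (e.g.\ by composing with the projection onto $[0,m^*]^2$) before invoking~\eqref{eqrep}, which is harmless since all the relevant measures are supported in $[0,m^*]^2$.
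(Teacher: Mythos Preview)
Your strategy is the same as the paper's: combine the line degeneracy of $\pi_u$ from Relation~\eqref{idSAP} with the conservation of mass to identify $\pi_u$ as a Dirac mass. The second paragraph is fine once the support claim is in hand. There is, however, a genuine gap in the step you single out as the ``main obstacle''.

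You only establish the first-moment identity
\[
\int (x+y)\,\pi_u(\diff x,\diff y)=m-x_2(u)\quad\text{for a.e.\ }u,
\]
and then assert that ``intersecting with the degeneracy from~\eqref{idSAP} then removes all remaining freedom''. This is not correct: knowing that $\pi_u$ is a probability measure carried by the ray $\{y=\gamma x/\gamma^*\}$ and that $\int(x+y)\,\diff\pi_u=m-x_2(u)$ only fixes the \emph{mean} of the one-dimensional distribution along that ray. Any probability measure on the ray with the right mean is still admissible, so $\pi_u$ need not be a Dirac mass from this information alone. Your ``more precisely'' clause in the first paragraph is in fact strictly weaker than the support claim preceding it.

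The missing idea, which is precisely what the paper uses, is to exploit that conservation of mass holds \emph{pointwise in time} and to test against $f(x+y)$ for an arbitrary continuous $f$, not just the linear function. From
\[
\int_0^t f\bigl(\overline X_0^{N_k}(u)+\overline X_1^{N_k}(u)\bigr)\,\diff u=\int_0^t f\Bigl(\frac{M_{N_k}}{N_k}-\overline X_2^{N_k}(u)\Bigr)\,\diff u,
\]
one passes to the limit and obtains $\int_0^t\int f(x+y)\,\pi_u(\diff x,\diff y)\,\diff u=\int_0^t f(m-x_2(u))\,\diff u$ for every continuous $f$ and every $t\ge 0$. Running $f$ through a countable dense family shows that for a.e.\ $u$ the pushforward of $\pi_u$ by $(x,y)\mapsto x+y$ is the Dirac mass at $m-x_2(u)$; equivalently, $\pi_u$ is supported on $\{x+y=m-x_2(u)\}$. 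Only then does intersection with the ray yield the single point $(m-x_2(u))(1-r,r)$, and your argument concludes.
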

One concludes that the measure $\pi_u(\diff x,\diff y)$ of  Proposition~\ref{propmu} is simply the Dirac measure at $[m-x_2(u)](1-r,r)$. This is the rigorous description of the fact described at the beginning of this section that if the fraction of polymerized mass is $x_2(u)$ then the fraction of regular [resp. misfolded] monomers is $(1-r)(m-x_2(u))$ [resp. $r(m-x_2(u))$].
\begin{proof}
The criteria of the modulus of continuity shows  that the sequence of  processes
\[
\left( \int_0^t g\left(\overline{X}_0^{N_k}(u), \overline{X}_1^{N_k}(u)\right)\,\diff u \right)
\]
is tight. By convergence in distribution of $(\mu_{N_k})$, one has, for $t\geq 0$, 
\begin{multline}\label{eqaux5}
\lim_{k\to +\infty} \int_0^t g\left(\overline{X}_0^{N_k}(u), \overline{X}_1^{N_k}(u)\right)\,\diff u \\
\int_0^t g(x,y)\pi_u(\diff x,\diff y)\,\diff u =
\int_0^t g\left(x,\frac{\gamma}{\gamma^*}x\right)\pi_u(\diff x,\diff y)\,\diff u
\end{multline}
by Proposition~\ref{proppi}. The same convergence in distribution also holds for finite marginals. One has to identify the first marginal of  $(\pi_u)$. If $f$ is a continuous function on $[0,m^*]$, by conservation of mass, one has the relation
\[
\left(\int_0^t f\left(\overline{X}_0^{N_k}(u)+\overline{X}_1^{N_k}(u)\right)\,\diff u\right)= 
\left(\int_0^t f\left(\frac{M_{N_k}}{N_k} - \overline{X}_2^{N_k}(u)\right)\,\diff u\right). 
\]
Relation~\eqref{eqaux5} and the convergence properties of the right hand side of this identity give the following identity of processes
\[
\left(\int_0^t f\left(x/r \right)\pi_u(\diff x,\diff y)\,\diff u\right)= 
\left(\int_0^t f\left(m -x_2(u)\right)\,\diff u\right).
\]
The proposition is proved. 
\end{proof}

\begin{theorem}\label{ThMF}
Under the scaling condition~\eqref{scaling} and if  the initial state of the solution $(X^N(t))$ of the SDE~\eqref{SDEMF}  is  $X^N(0)=(M_N,0,0)$ then, for the convergence in distribution,
\begin{equation}\label{limMF}
\lim_{N\to+\infty} \left(\frac{X_2^N(Nt)}{N}\right)= (x_2(t))\stackrel{\text{def.}}{=} \left( \frac{1-e^{-\beta r m t }}{1 + (\beta/\alpha r-1) e^{-\beta r m t }}m\right),
\end{equation}
with $r=\gamma/(\gamma+\gamma^*)$. 
\end{theorem}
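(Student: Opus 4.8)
The plan is to pass to the limit in the integral equation~\eqref{SDEX22} along a convergent subsequence, feeding in the stochastic averaging information already obtained, and then to identify the limit as the unique solution of the ODE~\eqref{eq:intermediaire}, which is solved in closed form. As recalled just before the last proposition, the rescaled martingale $M_2^N(Nt)/N$ vanishes in the limit: its previsible increasing process is, by~\eqref{X2croc}, of order $1/N$, and one applies Doob's inequality; the modulus of continuity criterion then makes the sequence $(\overline{X}_2^N)$ tight. Choose a subsequence $(N_k)$ along which $(\mu_{N_k},(\overline{X}_2^{N_k}(t)))$ converges in distribution to some $(\mu_\infty,(x_2(t)))$ with $\mu_\infty$ as in Proposition~\ref{propmu}; by Skorohod's representation theorem one may assume the convergence holds almost surely, and, the limit $(x_2(t))$ being continuous, that $\overline{X}_2^{N_k}\to x_2$ uniformly on compact sets. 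By Proposition~\ref{proppi} and the last proposition, for almost every $\omega$ the disintegration $\pi_u$ of $\mu_\infty$ is, for almost every $u$, the Dirac mass at $(m{-}x_2(u))(1{-}r,r)$.

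Next I pass to the limit in each term of~\eqref{SDEX22}. The martingale $M_2^{N_k}(N_k t)/N_k$ tends to $0$ in $L^2$ by~\eqref{X2croc} and Doob's inequality, hence almost surely along a further subsequence. For the $\alpha$-term, $\overline{X}_1^{N_k}(s)(\overline{X}_1^{N_k}(s){-}1/N_k)$ differs from $(\overline{X}_1^{N_k}(s))^2$ by an $O(1/N_k)$ uniformly in $s$, and $\int_0^t (\overline{X}_1^{N_k}(s))^2\,\diff s=\croc{\mu_{N_k},\ind{u\le t}\,y^2}$ converges to $\croc{\mu_\infty,\ind{u\le t}\,y^2}=\int_0^t r^2(m{-}x_2(u))^2\,\diff u$, using the almost sure weak convergence $\mu_{N_k}\to\mu_\infty$ together with $\mu_\infty(\{u{=}t\})=0$. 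For the $\beta$-term, write $\overline{X}_1^{N_k}(s)\overline{X}_2^{N_k}(s)=\overline{X}_1^{N_k}(s)x_2(s)+\overline{X}_1^{N_k}(s)(\overline{X}_2^{N_k}(s){-}x_2(s))$; the integral of the second summand over $[0,t]$ is bounded in absolute value by $(m^*)^2 t\sup_{s\le t}|\overline{X}_2^{N_k}(s){-}x_2(s)|\to 0$, while $\int_0^t\overline{X}_1^{N_k}(s)x_2(s)\,\diff s=\croc{\mu_{N_k},\ind{u\le t}\,y\,x_2(u)}\to\croc{\mu_\infty,\ind{u\le t}\,y\,x_2(u)}=\int_0^t r(m{-}x_2(u))x_2(u)\,\diff u$. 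Collecting the terms, $(x_2(t))$ satisfies, almost surely,
\[
x_2(t)=\alpha\int_0^t r^2(m-x_2(s))^2\,\diff s+\beta\int_0^t r(m-x_2(s))x_2(s)\,\diff s,
\]
that is, the ODE~\eqref{eq:intermediaire} with the initial condition $x_2(0)=0$.

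The right-hand side of~\eqref{eq:intermediaire} is a polynomial in $x_2$, hence locally Lipschitz, so the Cauchy--Lipschitz theorem gives a unique (deterministic) solution; every subsequential limit of $(\overline{X}_2^N)$ therefore coincides with it and the whole sequence converges, which proves~\eqref{limMF} once the closed form is checked. For that, put $y=m-x_2$, so that $\dot y=r\big[(\beta-\alpha r)y-\beta m\big]y$ with $y(0)=m$; the change of unknown $z=1/y$ linearises this into $\dot z=\beta r m\,z-r(\beta-\alpha r)$, whose solution with $z(0)=1/m$ is $z(t)=\big[(\beta-\alpha r)+\alpha r\,e^{\beta r m t}\big]/(\beta m)$. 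Then $x_2=m-1/z$ gives exactly the expression in~\eqref{limMF}.

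The only delicate point is the treatment of the $\beta$-term in the second paragraph: its integrand couples the fast coordinate $\overline{X}_1^{N_k}$ — whose time average is accessible only through the occupation measure $\mu_{N_k}$ — with the slow coordinate $\overline{X}_2^{N_k}$, which is itself part of the limit one is trying to identify. The remedy is to substitute the continuous limit $x_2$ for $\overline{X}_2^{N_k}$ at the cost of an error controlled by uniform convergence, and to note that $\mu_\infty$ charges neither the hyperplane $\{u{=}t\}$ nor the complement of the line $\{(x,\gamma x/\gamma^*)\}$, so that the discontinuous (and $\omega$-dependent) test function $\ind{u\le t}\,y\,x_2(u)$ is legitimate against $\mu_{N_k}\to\mu_\infty$. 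Everything else — the martingale estimate, the $O(1/N)$ corrections, and the ODE computation — is routine.
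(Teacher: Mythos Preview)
Your argument is correct and follows the same route as the paper: pass to the limit in~\eqref{SDEX22} along a subsequence where both $(\overline{X}_2^{N_k})$ and $(\mu_{N_k})$ converge, use the identification of $\pi_u$ as the Dirac mass at $(m-x_2(u))(1{-}r,r)$, obtain the integral equation~\eqref{eqaux7}, and conclude by uniqueness. You supply the details the paper leaves implicit, in particular the explicit derivation of the closed form via $y=m-x_2$, $z=1/y$.

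One minor remark on the $\beta$-term. Your splitting $\overline{X}_1^{N_k}\overline{X}_2^{N_k}=\overline{X}_1^{N_k}x_2+\overline{X}_1^{N_k}(\overline{X}_2^{N_k}-x_2)$ works, but it forces you to integrate an $\omega$-dependent test function against $\mu_{N_k}$, which (as you note) requires a small argument about the null set $\{u=t\}$. A slightly cleaner alternative, and presumably what the paper has in mind when it invokes the last Proposition directly, is to use conservation of mass: $\overline{X}_2^{N_k}=M_{N_k}/N_k-\overline{X}_0^{N_k}-\overline{X}_1^{N_k}$, so that $\overline{X}_1^{N_k}\overline{X}_2^{N_k}$ is, up to an $o(1)$ coming from $M_{N_k}/N_k\to m$, a continuous function $g(x,y)=y(m-x-y)$ of the fast coordinates alone. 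Then $\int_0^t\overline{X}_1^{N_k}\overline{X}_2^{N_k}\,\diff s\to\int_0^t g\big((m-x_2(u))(1-r,r)\big)\,\diff u=\int_0^t r(m-x_2(u))x_2(u)\,\diff u$ follows immediately from the Proposition, with no need to separate $\overline{X}_2^{N_k}$ from its limit. Either way the conclusion is the same.
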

\begin{proof}
By using Relation~\eqref{SDEX22}, Proposition~\ref{propmu} and the above proposition, one gets that any limiting point $(x_2(t))$ of $({X_2^N(Nt)}/{N})$  satisfies necessarily the following integral equation (integral form of the equation~\eqref{eq:intermediaire})
\begin{equation}\label{eqaux7}
x_2(t)=\alpha r^2\int_0^t (m-x_2(s))^2\,\diff s+\beta r\int_0^t (m-x_2(s))x_2(s)\,\diff s.
\end{equation}
By uniqueness of the solution of this equation, one gets the convergence in distribution of the sequence of processes  $({X_2^N(Nt)}/{N})$. Its explicit expression is easily obtained. 
\end{proof}
The following corollary gives the asymptotics of the first instant when a fraction $\delta\in(0,1)$ of monomers has been polymerized. This is a key quantity that can be measured with experiments. 
\begin{corollary}\label{lagLLN} [Asymptotics of Lag Time]
Under the conditions of Theorem~\ref{ThMF}, if for $\delta\in(0,1)$,
\begin{equation}\label{Tdelta}
T^N(\delta)=\inf\{t\geq 0: X_2^N(t)/M_N\geq \delta \},
\end{equation}
then, for the convergence in distribution
\begin{equation}\label{tdelta}
\lim_{N\to+\infty} \frac{T^N(\delta)}{N} = t_\delta\stackrel{\text{def.}}{=}\frac{1}{rm\beta}\log\left(1+\frac{\delta \beta}{\alpha r(1-\delta)}\right).
\end{equation}
\end{corollary}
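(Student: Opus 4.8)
The plan is to deduce the convergence of $T^N(\delta)/N$ from Theorem~\ref{ThMF} by writing $T^N(\delta)/N$ as a level-crossing time of the rescaled process $(\overline{X}_2^N(t))$ and invoking a continuity property of the hitting-time functional. First I would observe that, from the transitions in~\eqref{RateMF}, the coordinate $(X_2^N(t))$ is nondecreasing, so that for any level $\ell$ the time $\inf\{t\ge 0:X_2^N(t)\ge\ell\}$ is the generalized inverse of $(X_2^N(t))$ at $\ell$; a change of time variable then gives
\[
\frac{T^N(\delta)}{N}=\inf\left\{s\ge 0:\overline{X}_2^N(s)\ge \delta\frac{M_N}{N}\right\}.
\]
Since by~\eqref{scaling} the level $\delta M_N/N$ converges to $\delta m$, the statement reduces to showing that the crossing time of the level $\delta m$ by $(\overline{X}_2^N(t))$ converges in distribution to the crossing time $t_\delta$ of $\delta m$ by the deterministic limit $(x_2(t))$ of Theorem~\ref{ThMF}.

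Next I would record the elementary properties of $(x_2(t))$ from~\eqref{limMF}: it is continuous and strictly increasing on $\R_+$, with $x_2(0)=0$, $x_2(t)\to m$ as $t\to+\infty$, and $\dot x_2(t)>0$ for all $t$. Solving $x_2(t_\delta)=\delta m$ is then the computation
\[
e^{\beta r m t_\delta}=1+\frac{\delta\beta}{\alpha r(1-\delta)},
\]
which yields the announced expression~\eqref{tdelta}. Because $\dot x_2(t_\delta)>0$ the trajectory crosses the level transversally, equivalently $\inf\{s:x_2(s)\ge\ell\}=\inf\{s:x_2(s)>\ell\}$ near $\ell=\delta m$; this is exactly the condition under which the functional $\phi\mapsto\inf\{s\ge 0:\phi(s)\ge\ell\}$ is continuous at $\phi=x_2$ for the topology of uniform convergence on compact sets, and under which $\ell\mapsto\inf\{s:x_2(s)\ge\ell\}$ is continuous at $\ell=\delta m$.

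I would then use a sandwiching argument to absorb the $N$-dependence of the level. Fix a small $\eps>0$ with $0<\delta m-\eps<\delta m+\eps<m$, and set $t_\delta^{\pm}=\inf\{s:x_2(s)\ge\delta m\pm\eps\}$, so that $t_\delta^{\pm}\to t_\delta$ as $\eps\to 0$ by the continuity noted above. For $N$ large, $\delta m-\eps\le \delta M_N/N\le\delta m+\eps$, hence, using again that $(\overline{X}_2^N(t))$ is nondecreasing,
\[
\inf\{s:\overline{X}_2^N(s)\ge\delta m+\eps\}\ \ge\ \frac{T^N(\delta)}{N}\ \ge\ \inf\{s:\overline{X}_2^N(s)\ge\delta m-\eps\}.
\]
Applying the continuous mapping theorem to the hitting-time functional at the fixed levels $\delta m\pm\eps$, together with the convergence in distribution $(\overline{X}_2^N(t))\to(x_2(t))$ of Theorem~\ref{ThMF} (the limit being continuous since the jumps of $(\overline{X}_2^N(t))$ have size $1/N$ or $2/N$ and vanish uniformly), both bounds converge in distribution to $t_\delta^{\pm}$; letting $\eps\to 0$ gives $T^N(\delta)/N\to t_\delta$. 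As the limit is deterministic, this is in fact convergence in probability.

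The only delicate point is the continuity of the hitting-time functional, which genuinely fails at trajectories touching the level without crossing it, so one really uses $\dot x_2(t_\delta)>0$; the vanishing of the jumps of $(\overline{X}_2^N(t))$ rules out any pathological overshoot and makes the uniform-convergence framework (rather than merely the Skorokhod one) available. Everything else — the time change, the inversion of~\eqref{limMF}, and the passage $\eps\to 0$ — is routine.
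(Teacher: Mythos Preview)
Your argument is correct, and in fact the paper gives no proof at all for this corollary, treating it as an immediate consequence of Theorem~\ref{ThMF}. Your approach via monotonicity of $(X_2^N(t))$ and the continuous mapping theorem for the hitting-time functional is exactly the natural one; it is also the same duality $\{T^N(\delta)\ge s\}=\{X_2^N(s)<\delta M_N\}$ that the paper uses explicitly in the proof of the subsequent CLT corollary for the lag time, so your write-up is fully in line with the paper's method.
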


\subsection{Central Limit Theorem}
From Proposition~\ref{proppi}, it has been proved that if $f:[0,m^*]^2$ is a ${\cal C}^1$-function then, for the convergence in distribution
\[
\lim_{N\to+\infty} \left(\int_0^t \left(\gamma^*y{-}\gamma x\right)\left(\frac{\partial}{\partial x}f(x,y){-}\frac{\partial}{\partial y}f(x,y)\right)\, \mu_N(\diff x,\diff y,\diff s)\right)=(0),
\]
with the above notations,
The following proposition is an extension of this result. This is the key ingredient to prove the central limit result of this section. 
\begin{proposition}\label{propclt}
If $g:[0,m^*]^2\times\R_+$ is a ${\cal C}^1$-function then, for the convergence in distribution,
\[
\lim_{N\to+\infty} \left(\int_0^t \left(\gamma^*y{-}\gamma x\right)\left(\frac{\partial}{\partial x}g(x,y,u){-}\frac{\partial}{\partial y}g(x,y,u)\right)\,\sqrt{N}\mu_N(\diff x,\diff y, \diff u)\right)=(0).
\]
\end{proposition}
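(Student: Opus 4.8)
The plan is to read Proposition~\ref{propclt} as the $\sqrt{N}$-reinforcement of the convergence~\eqref{eqaux4}, enriched with a test function that is now allowed to depend on the time variable $u$. I would first treat the case of a $\mathcal{C}^2$ function $g$ and then recover the general $\mathcal{C}^1$ case by approximation. For the $\mathcal{C}^2$ case, the idea is to revisit the generator computation behind~\eqref{eqaux1}--\eqref{eqaux4}, but now applying Dynkin's formula to the space--time process $(\overline{X}_0^N(t),\overline{X}_1^N(t),t)$ and keeping one more order in $N$. In the scaled coordinates $\overline{X}^N=(\overline{X}_0^N,\overline{X}_1^N)$ the Ehrenfest transitions occur at rates $N^2\gamma^*\overline{X}_1^N$ and $N^2\gamma\overline{X}_0^N$ with jumps $\pm(1/N,-1/N)$, while the $\alpha$- and $\beta$-transitions occur at rate $O(N)$ and move only the second coordinate, by $O(1/N)$. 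A second order Taylor expansion of $g$ therefore gives, for the generator $\mathcal{L}_N$ of the scaled Markov process acting on functions of $(x,y,u)$,
\[
\mathcal{L}_N g(\cdot,u)(x,y)=N(\gamma^*y-\gamma x)\left(\partial_x g-\partial_y g\right)(x,y,u)+R_N,
\]
where $R_N$ collects the second order terms $N^2\cdot O(1/N^2)$ of the fast jumps and the $O(1)$ contributions of the slow ones, so that $\sup|R_N|=O(1)$ uniformly in $N$ (and also in the path). Dynkin's formula applied to $g$, followed by division by $\sqrt{N}$, then yields the identity, valid for every $t\ge 0$,
\begin{multline*}
\sqrt{N}\int_0^t\left(\gamma^*\overline{X}_1^N(s)-\gamma\overline{X}_0^N(s)\right)\left(\partial_x g-\partial_y g\right)\left(\overline{X}^N(s),s\right)\,\diff s\\
=\frac{1}{\sqrt{N}}\left(g(\overline{X}^N(t),t)-g(\overline{X}^N(0),0)-\int_0^t\partial_u g(\overline{X}^N(s),s)\,\diff s\right)\\
-\frac{1}{\sqrt{N}}\int_0^t R_N(s)\,\diff s-\frac{M_g^N(t)}{\sqrt{N}},
\end{multline*}
where $(M_g^N(t))$ is the associated martingale. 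Unwinding the definition of $\mu_N$, the left-hand side is precisely the process appearing in the statement.

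The second step is to show that each term on the right-hand side tends to $0$. Since $(\overline{X}^N(s))$ stays in the compact set $[0,m^*]^2$ and $g$, $\partial_u g$, $R_N$ are uniformly bounded, the first two right-hand members are $O(1/\sqrt{N})$, uniformly for $t\in[0,T]$. For the martingale, the increment of $g$ along any transition is $O(1/N)$ while the total jump rate is $O(N^2)$, so $\croc{M_g^N}(t)=O(t)$ and the previsible increasing process of $(M_g^N(t)/\sqrt{N})$ is $O(t/N)$; Doob's inequality then gives $\E\big[\sup_{t\le T}(M_g^N(t)/\sqrt{N})^2\big]\to 0$. Hence the left-hand side, seen as a process in $t$, converges to $0$ uniformly on compact sets, in probability, which is stronger than the announced convergence in distribution to the null process. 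As at the end of the proof of Proposition~\ref{proppi}, the same statement for finite-dimensional marginals together with the change of variables rewriting the space integral as an integral against $\mu_N$ yields the result in the measure formulation.

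The delicate point, which I expect to be the real obstacle, is the uniform control of $R_N$, that is, the regularity that must be required of $g$. If $g\in\mathcal{C}^2$ (more generally if $\nabla g$ is H\"older continuous of exponent $>1/2$) then $\sup|R_N|=O(1)$ and the argument above closes. For a merely $\mathcal{C}^1$ function $g$ one would like to approximate it by smooth $g_\eps$ with $g_\eps\to g$ in $\mathcal{C}^1$, but the approximation error is amplified by the factor $\sqrt{N}$, so the naive bound $\sqrt{N}\,\|g-g_\eps\|_{\mathcal{C}^1}\,t$ is useless. One must instead re-run the Dynkin identity directly on $h_\eps=g-g_\eps$: its boundary term and its $\partial_u h_\eps$-term are $O(\|h_\eps\|_{\mathcal{C}^1}/\sqrt{N})$, the martingale term has previsible increasing process $O(\|h_\eps\|_{\mathcal{C}^1}^2\,t/N)$, and all of these vanish as $N\to+\infty$ for $\eps$ fixed; the only term calling for genuine care is the first order Taylor remainder, of size $\sqrt{N}\,t\,\omega(\nabla h_\eps;1/N)$ with $\omega$ a modulus of continuity, which is exactly where a quantitative modulus-of-continuity bound on $\nabla g$ enters. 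Once that is in place, $\lim_\eps$ and $\lim_N$ may be interchanged and the proposition follows, everything else reducing to the generator and martingale estimates already used for~\eqref{eqaux4}.
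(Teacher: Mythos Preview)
Your approach is exactly the one taken in the paper: derive the analogue of Relation~\eqref{eqaux3} for the space--time function $g$, divide by $\sqrt{N}$ instead of $N$, and check that the boundary, drift and martingale terms on the right-hand side all vanish (this is precisely Relation~\eqref{eqaux6}). The regularity concern you raise about the $\mathcal{C}^1$ case is real; the paper simply writes the Taylor remainder as $o(1/\sqrt{N})$, which tacitly needs a second-order expansion (hence $\mathcal{C}^2$), and does not attempt any $\mathcal{C}^1$ approximation --- since the proposition is only invoked in the proof of Theorem~\ref{CLTMF} with test functions that are polynomial in $(x,y)$ and smooth in $u$, the $\mathcal{C}^2$ argument you outline is all that is actually required.
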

\begin{proof}
We follow the same lines as in the proof of Proposition~\ref{proppi}.
The analogue of Relation~\eqref{eqaux3} is
\begin{multline}\label{eqaux6}
\frac{1}{\sqrt{N}}\left(g\left( \overline{X}^N(t),t\right) -g\left( \overline{X}^N(0),0\right)\right)\\
=\sqrt{N}\int_0^t \left(\gamma^*y{-}\gamma x\right)\left(\frac{\partial}{\partial x}g(x,y,s){-}\frac{\partial}{\partial y}g(x,y,s)\right)\,\mu_N(\diff x,\diff y, \diff s)\\
- \frac{\alpha}{\sqrt{N}} \int_0^t \overline{X}^N_1(s)\left(\overline{X}^N_1(s)-1/N\right) \frac{\partial f}{\partial y}\left(\overline{X}^N(s)\right)\, \diff s\\
-\frac{\beta}{\sqrt{N}} \int_0^t \overline{X}^N_1(s)\overline{X}^N_2(s) \frac{\partial f}{\partial y}\left(\overline{X}^N(s)\right)\, \diff s\\
+ \frac{1}{\sqrt{N}} \int_0^t \frac{\partial f}{\partial z}\left(\overline{X}^N(s),s\right)\, \diff s +\frac{M_g^N(Nt)}{\sqrt{N}}+o(1/\sqrt{N}),
\end{multline}
It is not difficult to check  with the analogue of Relation~\eqref{eqaux2} for the previsible increasing process of the martingale $({M_g^N(Nt)}/{\sqrt{N}})$ that, for $t\geq 0$, 
\[
\lim_{N\to+\infty} \E\left(\croc{\frac{M_g^N(Nt)}{\sqrt{N}}}\right)=
\lim_{N\to+\infty} \frac{\E\left(\croc{M_g^N}(Nt)\right)}{N}=0.
\]
Consequently, by Doob's Inequality, the martingale of Relation~\eqref{eqaux6} vanishes when $N$ gets large. The desired convergence of the proposition is then easily derived. 
\end{proof}

\begin{theorem}[Central Limit Theorem]\label{CLTMF}
Under Condition~\eqref{scaling} and if $(x_2(t))$ is the function defined by Relation~\eqref{limMF} then, for the convergence in distribution,
\[
\lim_{N\to+\infty} \left(\frac{X_2^N(Nt)-Nx_2(t)}{\sqrt{N}}\right)= (U(t)),
\]
where $(U(t))$ is the solution of the stochastic differential equation
\begin{equation}\label{U}
\diff U(t) = \sqrt{\sigma(t)}\diff B(t)+h(t) U(t)\,\diff t,
\end{equation}
and  $(B(t))$ is a standard Brownian motion and
\[
\begin{cases}
\sigma(t)= 2\alpha r^2 (m{-}x_2(t))^2{+}\beta r(m{-}x_2(t))x_2(t) \\
h(t)=r(\beta-2\alpha r)(m-x_2(t))-\beta r x_2(t). 
\end{cases}
\]
\end{theorem}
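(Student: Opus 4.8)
The plan is to work directly with the martingale decomposition of $(X_2^N(Nt))$ and subtract the deterministic part $Nx_2(t)$ given by Theorem~\ref{ThMF}, then pass to the $1/\sqrt N$ scale. Write $U_N(t) = (X_2^N(Nt) - Nx_2(t))/\sqrt N$. Starting from Relation~\eqref{SDEX22}, after the change of time $t\mapsto Nt$ and using that $x_2$ solves the integral equation~\eqref{eqaux7}, one gets
\begin{equation*}
U_N(t) = \sqrt N\int_0^t\!\Big(\alpha\big[\overline X_1^N(s)(\overline X_1^N(s){-}1/N) - r^2(m{-}x_2(s))^2\big] + \beta\big[\overline X_1^N(s)\overline X_2^N(s) - r(m{-}x_2(s))x_2(s)\big]\Big)\diff s + \frac{M_2^N(N^2 t)}{\sqrt N}.
\end{equation*}
The drift integrand splits into a ``fluctuation of the slow variable'' part, in which one replaces $\overline X_1^N(s)$ by $r(m-\overline X_2^N(s))$ and $\overline X_2^N(s)$ by $x_2(s)+U_N(s)/\sqrt N$, producing after a first-order expansion the term $\int_0^t h(s)U_N(s)\,\diff s + o(1)$; and a ``fast averaging fluctuation'' part measuring the discrepancy between $\overline X_1^N(s)$ and its conditional equilibrium value $r(m-\overline X_2^N(s))$. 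The first is handled by Taylor expansion, using that $x_2$ and $h$ are bounded on compacts and that $(U_N(t))$ is tight (to be checked via the modulus-of-continuity criterion, using the martingale bound below). The second is precisely where Proposition~\ref{propclt} enters: the quantities $\overline X_1^N(s)(\overline X_1^N(s)-1/N)-r^2(m-x_2(s))^2$ and $\overline X_1^N(s)\overline X_2^N(s)-r(m-x_2(s))x_2(s)$ can be written, modulo lower-order terms, as $(\gamma^*\overline X_1^N(s)-\gamma\overline X_0^N(s))$ times a $\mathcal C^1$ function of $(\overline X_0^N(s),\overline X_1^N(s),s)$ of the form $\partial_x g - \partial_y g$ for a suitable $g$ — indeed $\gamma^* y - \gamma x = 0$ on the support of $\pi_u$, which is exactly the manifold $y=(\gamma/\gamma^*)x$, so any smooth function vanishing there factors through $\gamma^*y-\gamma x$. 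Applying Proposition~\ref{propclt} to each such $g$ then shows that the time integral of the ``fast fluctuation'' part, multiplied by $\sqrt N$, converges in distribution to $0$.

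Next I would identify the limit of the martingale term. By Equation~\eqref{X2croc} (with the time change), the previsible increasing process of $M_2^N(N^2\,\cdot)/\sqrt N$ at time $t$ equals
\begin{equation*}
\int_0^t\!\Big(2\alpha\,\overline X_1^N(s)(\overline X_1^N(s){-}1/N) + \beta\,\overline X_1^N(s)\overline X_2^N(s)\Big)\diff s,
\end{equation*}
and by Theorem~\ref{ThMF} together with the averaging result (the proposition preceding it, giving $\pi_u = \delta_{(m-x_2(u))(1-r,r)}$) this converges in probability, uniformly on compacts, to $\int_0^t\sigma(s)\,\diff s$. The jumps of the rescaled martingale are $O(1/\sqrt N)$, hence vanish, so the martingale central limit theorem (e.g.\ Theorem~7.1.4 in Ethier--Kurtz, or Rebolledo's theorem as in Billingsley~\cite{Billingsley}) gives convergence of $M_2^N(N^2\,\cdot)/\sqrt N$ to $\int_0^\cdot \sqrt{\sigma(s)}\,\diff B(s)$ for a standard Brownian motion $(B(t))$. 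Combining tightness of $(U_N)$, the three convergences above, and passing to the limit along any convergent subsequence, every limit point $(U(t))$ satisfies $U(t) = \int_0^t\sqrt{\sigma(s)}\,\diff B(s) + \int_0^t h(s)U(s)\,\diff s$, which is the mild/integrated form of~\eqref{U}. Since this linear SDE with bounded, continuous coefficients $\sigma,h$ has a pathwise unique solution, the whole sequence converges, proving the theorem.

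The main obstacle is the second step: showing that the $O(\sqrt N)$-magnified discrepancy between the fast coordinate $\overline X_1^N$ and its local equilibrium $r(m-\overline X_2^N)$ does not contribute to the limit. This is a genuinely delicate point because, a priori, $\sqrt N(\gamma^*\overline X_1^N(s)-\gamma\overline X_0^N(s))$ is $O(1)$ (the Ehrenfest urn fluctuates at scale $\sqrt N$ around its mean), so the integrand being magnified is not pointwise small; only its time-integral is small, and only because of the oscillation/averaging captured by Proposition~\ref{propclt}. Care is needed to (i) write the drift discrepancies exactly in the factored form $(\gamma^*y-\gamma x)(\partial_x g - \partial_y g)$ with an explicit admissible $g$ — this requires a small computation since $\overline X_1^N(\overline X_1^N-1/N) - r^2(m-x_2)^2$ is not literally a function of $(\overline X_0^N,\overline X_1^N)$ alone (it also involves $x_2(s)$, hence the need for the time-dependent $g(x,y,u)$ allowed in Proposition~\ref{propclt}), and to absorb the $x_2$-versus-$\overline X_2^N$ mismatch into the drift term $\int h(s)U_N(s)\,\diff s$; and (ii) control the error terms $o(1/N)$ in the discrete-difference expansions after multiplication by $\sqrt N\cdot N$, i.e.\ verify they are genuinely $o(1/\sqrt N)$ after the two rescalings, using the boundedness of $\overline X_i^N$. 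Once the factorisation is set up cleanly, the rest is the routine tightness-plus-identification argument sketched above.
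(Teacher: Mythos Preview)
Your approach is essentially the same as the paper's: both start from the martingale decomposition~\eqref{SDEX22} minus the ODE~\eqref{eqaux7}, factor the drift discrepancy into a piece proportional to $U_N$ (yielding $\int h(s)U_N(s)\,ds$) and a piece proportional to $\gamma^*\overline X_1^N-\gamma\overline X_0^N$ that is killed by Proposition~\ref{propclt}, and identify the martingale limit via convergence of its bracket to $\int\sigma(s)\,ds$. The paper carries out the algebra explicitly (Relations~\eqref{eqaux9}--\eqref{eqaux10}) rather than via your ``replace $\overline X_1^N$ by $r(m-\overline X_2^N)$ then Taylor expand'' framing, and note that your martingale should read $M_2^N(Nt)/\sqrt N$, not $M_2^N(N^2t)/\sqrt N$, since~\eqref{SDEX22} already incorporates the time change.
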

The corresponding result of Eug\`ene et al.~\cite{EXRD} when there is no misfolding phenomenon shows that the functions $\sigma$ and $h$ are similar if $\alpha$ and $\beta$ are respectively replaced by $\alpha r^2$ and $\beta r$. 
\begin{proof}
Denote
\[
U^N(t)=\frac{X_2^N(Nt)-Nx_2(t)}{\sqrt{N}}=\sqrt{N}\left(\overline{X}_2^N(t))-x_2(t)\right).
\]
By combining Equation~\eqref{SDEX22},
\[
\overline{X}_2^N(t))=\alpha\int_0^t \overline{X}^N_1(s)^2\,\diff s\\+\beta\int_0^t \overline{X}_1^N(s) \overline{X}_2^N(s)\,\diff s+\frac{M_2^N(Nt)}{N}+O(1/N)
\]
and Relation~\eqref{eqaux7},
\begin{equation}\label{eqaux8}
x_2(t)=\alpha r^2\int_0^t (m-x_2(s))^2\,\diff s+\beta r\int_0^t (m-x_2(s))x_2(s)\,\diff s,
\end{equation}
one gets
\begin{multline*}
U^N(t)= \alpha\sqrt{N}\int_0^t \left(\overline{X}^N_1(s)^2-r^2(m-x_2(s))^2\right)\,\diff s\\+\beta\sqrt{N}\int_0^t \left(\overline{X}_1^N(s) \overline{X}_2^N(s)-r(m-x_2(s))x_2(s)\right)\,\diff s+\frac{M_2^N(Nt)}{\sqrt{N}}+O(1/\sqrt{N}).
\end{multline*}
Concerning the martingale term,  Relation~\eqref{X2croc} gives, for $t\geq 0$,
\[
\croc{\frac{M_2^N}{\sqrt{N}}}(Nt)=2\alpha\int_0^t \overline{X}_1^N(s)^2\,\diff s+\beta \int_0^t \overline{X}_1^N(s) \overline{X}_2^N(s)\,\diff s+O(1/N).
\]
With the same method as in the proof of Theorem~\ref{ThMF}, one gets the following convergence in distribution
\[
\lim_{N\to+\infty} \left(\croc{\frac{M_2^N}{\sqrt{N}}}(Nt)\right)
{=}\left(2\alpha r^2\int_0^t (m{-}x_2(s))^2\,\diff s{+}\beta r \int_0^t x_2(s)(m{-}x_2(s))\,\diff s\right)
\]
by Relation~\eqref{eqaux8}.

Note also that, for $s\geq 0$,
\begin{multline*}
\sqrt{N}\left(\overline{X}_1^N(s) \overline{X}_2^N(s)-r(m-x_2(s))x_2(s)\right)\\=
U^N(s)\overline{X}_1^N(s)+\sqrt{N}\left(\overline{X}_1^N(s)-r(m-x_2(s))\right)x_2(s)
\end{multline*}
and
\begin{equation}\label{eqaux10}
\sqrt{N}\left(\overline{X}^N_1(s)-r(m-x_2(s)\right)=-\frac{\sqrt{N}}{\gamma+\gamma^*}\left(\gamma \overline{X}^N_0(s)-\gamma^* \overline{X}^N_1(s)\right)
-rU^N(t).
\end{equation}
The above relation for $(U^N(t))$ can then be rewritten as 
\begin{multline}\label{eqaux9}
U^N(t)= \int_0^t U^N(s)\left( (\beta-\alpha r)\overline{X}^N_1(s)-\alpha r^2(m-x_2(s))-\beta r x_2(s)\right)\,\diff s\\
{-}\frac{1}{\gamma+\gamma^*} \int_0^t \left(\gamma^*y{-}\gamma x\right)\left[\alpha (y{+}r(m{-}x_2(s))){+}\beta x_2(s)\right]\,\sqrt{N}\mu_N(\diff x,\diff y,\diff s)\\
+\frac{M_2^N(Nt)}{\sqrt{N}}+O(1/\sqrt{N}).
\end{multline}
The convergence in distribution of the martingale, Proposition~\ref{propclt} and the criterion of the modulus of continuity give easily the tightness of the sequence $(U^N(t))$. Let $(U(t))$ be a limit of some subsequence $(U^{N_k}(t))$.   

A close look at Relation~\eqref{eqaux9} shows that the theorem will be proved, with standard arguments, if the following convergence in distribution is proved
\[
\lim_{k\to+\infty} \left(\int_0^t U^{N_k}(s)\overline{X}^{N_k}_1(s)\,\diff s\right) = \left(r \int_0^t U(s)(m-x_2(s))\,\diff s\right).
\]
For $k\geq 0$,
\begin{align*}
\int_0^t& U^{N_k}(s)\overline{X}^{N_k}_1(s)-rU(s)(m-x_2(s))\,\diff s\\
&{=}\int_0^t U^N(s)\left(\overline{X}^N_1(s){-}r(m{-}x_2(s))\right)\,\diff s{+}\int_0^t r(m{-}x_2(s))\left(U^N(s){-}U(s)\right)\,\diff s,
\end{align*}
the process associated to the last term of the second part of this  identity converges in distribution to $0$. By Relation~\eqref{eqaux10}, the first term  can be written as 
\begin{multline*}
 -\int_0^t \left(\overline{X}^{N_k}_2(s)-x_2(s)\right)\left(\frac{\sqrt{{N_k}}}{\gamma+\gamma^*}\left(\gamma^* \overline{X}^{N_k}_0(s)-\gamma \overline{X}^{N_k}_1(s)\right)+rU^{N_k}(t)\right)\,\diff s\\
= -r\int_0^t \left(\overline{X}^{N_k}_2(s)-x_2(s)\right)U^{N_k}(s)\,\diff s\\ -\frac{1}{\gamma+\gamma^*} \int_0^t \left(\frac{M_{N_k}}{{N_k}}-x-y-x_2(s)\right)\left(\gamma x-\gamma^* y\right)\,\sqrt{{N_k}}\mu_{N_k}(\diff x,\diff y,\diff s).
\end{multline*}
the first term of the right hand side converges in distribution to $0$ due to Theorem~\ref{ThMF} and the same property also holds for the second term by Proposition~\ref{propclt}. The theorem is proved. 
\end{proof}

As a consequence, one gets the following central limit theorem for the lag time.  The notations of Corollary~\ref{lagLLN} and Theorems~\ref{ThMF} and~\ref{CLTMF} are used. 
\begin{corollary} 
Under the scaling regime~\eqref{scaling},  for $\delta\in(0,1)$,
the convergence in distribution
\begin{equation}\label{eqVT}
\lim_{N\to+\infty} \frac{T^N(\delta)-N t_\delta}{\sqrt{N}} = 
\frac{\delta\eta-U(t_\delta)}{r m^2(1-\delta)(\beta+\alpha r(1-\delta))}
\end{equation}
holds, where the variables  $T^N(\delta)$ and $t_\delta$ are defined by~\eqref{Tdelta} and~\eqref{tdelta} and $(U(t))$ by~\eqref{U}, and $r=\gamma/(\gamma+\gamma^*)$. 
\end{corollary}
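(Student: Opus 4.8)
The plan is to bootstrap from what has already been established for the polymerized mass: the functional law of large numbers of Theorem~\ref{ThMF} (and Corollary~\ref{lagLLN}, giving $T^N(\delta)/N\to t_\delta$), and the functional central limit theorem of Theorem~\ref{CLTMF}, giving $\sqrt N\,(\overline{X}_2^N(t)-x_2(t))\to U(t)$. The bridge between a hitting time and the value of the process will be a first order (mean value) expansion of the deterministic curve $(x_2(t))$ near $t_\delta$, once one has observed that $x_2(t_\delta)=\delta m$ — which follows either from the explicit expression~\eqref{limMF} or, equivalently, by checking that the time $t_\delta$ of~\eqref{tdelta} is exactly the instant at which $(x_2(t))$ reaches the level $\delta m$.

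Concretely, I would set $\tau^N=T^N(\delta)/N$ and $U^N(t)=\sqrt N\,(\overline{X}_2^N(t)-x_2(t))$. Since the jumps of $(X_2^N(t))$ are bounded by $2$, at the instant $T^N(\delta)$ the overshoot of the level $\delta M_N$ is $O(1)$, so $\overline{X}_2^N(\tau^N)=\delta M_N/N+O(1/N)$. Writing the left-hand side as $x_2(\tau^N)+U^N(\tau^N)/\sqrt N$ and using $x_2(t_\delta)=\delta m$, this gives
\[
x_2(\tau^N)-x_2(t_\delta)=\delta\Bigl(\frac{M_N}{N}-m\Bigr)-\frac{U^N(\tau^N)}{\sqrt N}+O\Bigl(\frac1N\Bigr).
\]
The function $(x_2(t))$ is ${\cal C}^1$ and, being the solution of~\eqref{eq:intermediaire} with $x_2(0)=0$, is strictly increasing on $\R_+$ with $\dot x_2(t_\delta)>0$; the mean value theorem thus writes the left-hand side as $\dot x_2(\xi^N)(\tau^N-t_\delta)$ for some $\xi^N$ between $\tau^N$ and $t_\delta$. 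Multiplying by $\sqrt N$ and dividing by $\dot x_2(\xi^N)$,
\[
\frac{T^N(\delta)-N t_\delta}{\sqrt N}=\frac{1}{\dot x_2(\xi^N)}\left(\delta\,\frac{M_N-mN}{\sqrt N}-U^N(\tau^N)+O\Bigl(\frac1{\sqrt N}\Bigr)\right).
\]

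It then remains to pass to the limit in each factor. By Corollary~\ref{lagLLN}, $\tau^N\to t_\delta$ in probability (convergence in distribution to a constant), hence $\dot x_2(\xi^N)\to\dot x_2(t_\delta)$ in probability, the limit being computed from~\eqref{eq:intermediaire} at $x_2=\delta m$; the deterministic term tends to $\delta\eta$, where $\eta$ denotes the second order limit of the initial condition, $\eta=\lim_N (M_N-mN)/\sqrt N$ (it is $0$ for the natural choice $M_N=\lceil mN\rceil$). The delicate point, and the main obstacle, is the term $U^N(\tau^N)$: one has to evaluate the processes $(U^N)$, which converge in distribution in the Skorokhod space to $(U)$ by Theorem~\ref{CLTMF}, at the random times $\tau^N$. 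Here I would use that the limit process $(U(t))$ has continuous sample paths, so that convergence in distribution of $(U^N)$ together with convergence in probability of $\tau^N$ to the constant $t_\delta$ yields the joint convergence $\bigl((U^N),\tau^N\bigr)\Rightarrow\bigl((U),t_\delta\bigr)$, and that the evaluation map $(w,s)\mapsto w(s)$ is continuous at every pair $(w,s)$ with $w$ continuous at $s$; the continuous mapping theorem then gives $U^N(\tau^N)\Rightarrow U(t_\delta)$. Slutsky's theorem finally produces the convergence in distribution~\eqref{eqVT}, the $O(1/N)$ overshoot term being harmless since it remains $O(1/\sqrt N)$ after multiplication by $\sqrt N$.
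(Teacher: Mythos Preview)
Your argument is correct, but the route differs from the paper's. The paper uses the monotonicity of $(X_2^N(t))$ to rewrite the event $\{(T^N(\delta)-Nt_\delta)/\sqrt N\geq z\}$ as $\{X_2^N(s_N)<\delta M_N\}$ with the \emph{deterministic} time $s_N=Nt_\delta+z\sqrt N$, and then applies Theorem~\ref{CLTMF} at $s_N/N\to t_\delta$ together with a first order expansion of $x_2$ at $t_\delta$ to identify the limiting distribution function directly. You instead evaluate at the \emph{random} time $\tau^N=T^N(\delta)/N$: you control the $O(1)$ overshoot at the hitting instant, use the mean value theorem on $x_2$, and handle $U^N(\tau^N)$ via joint convergence in the Skorokhod space and continuity of the limit $(U(t))$. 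The paper's switching-relation approach is a little shorter because it bypasses the random-time evaluation; your approach is more explicit algebraically, making transparent that the denominator in~\eqref{eqVT} is just the slope $\dot x_2(t_\delta)$ of the limit curve at the crossing level, and it also pins down the otherwise undefined constant $\eta$ as $\lim_N (M_N-mN)/\sqrt N$.
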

\begin{proof}
For $z\in\R$ note that, since $(X_2^N(t))$ is a non-decreasing  process, 
\begin{multline*}
\left\{\frac{T^N(\delta)-N t_\delta}{\sqrt{N}} \geq z\right\}=
\left\{X_2^N\left(s_N\right) <\delta M_N\right\}\\
=\left\{\frac{\overline{X}_2^N\left(s_N\right)-Nx_2(s_N/N)}{\sqrt{N}} <\frac{\delta M_N -Nx_2(s_N/N)}{\sqrt{N}}\right\},
\end{multline*}
with $s_N=N t_\delta+z\sqrt{N}$. From Theorem~\ref{CLTMF} one gets the convergence in distribution
\[
\lim_{N\to+\infty} \frac{\overline{X}_2^N\left(s_N\right)-Nx_2(s_N/N)}{\sqrt{N}}=U(t_\delta)
\]
and the expansion of $(x_2(t))$ at $t_\delta$ gives
\[
\lim_{N\to+\infty} \frac{\delta M_N -Nx_2(s_N/N)}{\sqrt{N}}=\delta\eta -z r m^2(1-\delta)(\beta+\alpha r(1-\delta)).
\]
This completes the proof of the corollary.
\end{proof}
Equation~\eqref{eqVT} shows that the variance of the lag time is inversely proportional to $\gamma/\gamma^*$, a low misfolding rate will thus increase the variability of the polymerisation process. 
\section{Models with Scaled Reaction Rates}\label{AlpSec}
For $t\geq 0$,  $X^N_1(t)$ is the number of  monomers at time $t$ and $X^N_2(t)$ is the number of polymerized monomers. The initial condition is $X_1^N(0)=M_N$ and $X_2^N(0)=0$. Because of the relation of conservation of mass, one has $M_N=X_1^N(t)+X_2^N(t)$. 

It is not difficult to see that the process $(X_2^N(t))$  can be represented as the solution of the following stochastic differential equations, 
\begin{equation}\label{SDEAlp}
\diff X_2^N(t) =  2 \sum_{i=1}^{{X_1^N(X_1^N{-}1)(s{-})}/{2}} \mathcal{N}_{{\alpha}/{N^{\nu+2}}}^i(\diff t){+}\sum_{i=1}^{X_1^N(s-)X_2^N(s{-}))} \mathcal{N}_{{\beta}/{N^2}}^i(\diff t).
\end{equation}
By integrating this equation, one gets the relation
\begin{equation}\label{eqa1}
X_2^N(t) = \frac{\alpha}{N^{2+\nu}}\int_0^t X_1^N(s)(X_1^N(s){-}1)\,\diff s +\frac{\beta}{N^2}\int_0^t X_1^N(s)X_2^N(s)\,\diff s+M^N(t),
\end{equation}
where $(M^N(t))$ is a martingale whose previsible increasing process is given by
\begin{equation}\label{eqa2}
\croc{M}_N(t)=2\frac{\alpha}{N^{2+\nu}}\int_0^t X_1^N(s)(X_1^N(s){-}1)\,\diff s +\frac{\beta}{N^2}\int_0^t X_1^N(s)X_2^N(s)\,\diff s.
\end{equation}
The following proposition shows that, on the time scale $t\mapsto Nt$, the polymerised mass is for this model in the order of $N^{1-\nu}$. 
\begin{proposition}
Under the scaling condition~\eqref{scaling}, for the convergence in distribution, the relation
\[
\lim_{N\to+\infty} \left(\frac{X_2^N(Nt)}{N^{1-\nu}}\right)=\left(\frac{\alpha m}{\beta}\left(e^{\beta m t} -1\right)\right)
\]
holds.
\end{proposition}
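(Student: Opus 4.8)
The plan is to analyze the scaled process $Y^N(t) = X_2^N(Nt)/N^{1-\nu}$ directly from the integral representation~\eqref{eqa1}, evaluated at time $Nt$. First I would substitute into~\eqref{eqa1}: since the conservation of mass gives $X_1^N(s) = M_N - X_2^N(s)$ and $X_2^N(Ns)/N = N^{-\nu}Y^N(s) \to 0$ (which must be justified a posteriori, or via a preliminary crude bound showing $X_2^N$ stays $o(N)$ on compact time intervals), one expects $X_1^N(Ns)/N \to m$ uniformly on $[0,t]$. After dividing~\eqref{eqa1} by $N^{1-\nu}$ and changing the time variable to $Nt$, the first integral contributes $\frac{\alpha}{N^{2+\nu}}\cdot N \int_0^t X_1^N(Ns)(X_1^N(Ns)-1)\,\diff s / N^{1-\nu} = \alpha \int_0^t (X_1^N(Ns)/N)^2\,\diff s + o(1)$, which should converge to $\alpha m^2 t \cdot$, wait — more carefully, $\frac{\alpha}{N^{2+\nu}} N^{\nu} N \int_0^t(\cdots)\diff s/N \approx \alpha m^2 \int_0^t \diff s$ needs the bookkeeping $N^{1-\nu}\cdot N^{2+\nu} = N^3$ against $N\cdot N^2$ from the time change and the two factors of $X_1^N\sim mN$; this works out to a bounded $O(1)$ source term $\alpha m^2 t$ in the limit. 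The second integral, $\frac{\beta}{N^2}\int_0^{Nt} X_1^N(s)X_2^N(s)\,\diff s$ divided by $N^{1-\nu}$, becomes (after the time change) $\beta \int_0^t (X_1^N(Ns)/N)\, Y^N(s)\,\diff s$, which should converge to $\beta m \int_0^t y(s)\,\diff s$ with $y$ the sought limit.

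Next I would dispose of the martingale term. From~\eqref{eqa2}, the previsible increasing process of $M^N(Nt)/N^{1-\nu}$ is $\croc{M}_N(Nt)/N^{2-2\nu}$; inserting the bounds $X_1^N \le M_N = O(N)$ and $X_2^N(Ns) = O(N^{1-\nu})$ (again bootstrapped from a preliminary estimate) shows this is $O(N^{-1+\nu}) \cdot$ ... one checks $\frac{\alpha}{N^{2+\nu}}\cdot N^{2}\cdot N / N^{2-2\nu} = \alpha N^{-1+\nu} \to 0$ for the first piece and $\frac{\beta}{N^2}\cdot N \cdot N^{1-\nu} \cdot N / N^{2-2\nu} = \beta N^{-\nu}\to 0$ for the second, so $\E(\croc{M^N(N\cdot)/N^{1-\nu}}(t)) \to 0$ and Doob's inequality gives that the martingale converges in distribution to $0$ uniformly on compacts. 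Tightness of $(Y^N(t))$ then follows from the modulus-of-continuity criterion applied to the integral equation, exactly as in the proof of Theorem~\ref{ThMF}.

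Passing to the limit along a convergent subsequence, any limit point $(y(t))$ satisfies the linear integral equation $y(t) = \alpha m^2 t + \beta m \int_0^t y(s)\,\diff s$ — wait, I should double check the constant: the source is $\alpha \int_0^t (X_1^N(Ns)/N)^2 \diff s \to \alpha m^2 t$, so indeed $y(t) = \alpha m^2\, t + \beta m\int_0^t y(s)\,\diff s$. Hmm, but the claimed answer is $\frac{\alpha m}{\beta}(e^{\beta m t}-1)$; differentiating gives $y'(t) = \alpha m e^{\beta m t}$, while the ODE $y' = \alpha m^2 + \beta m y$ gives $y'(0) = \alpha m^2$, which matches $\alpha m e^{0} = \alpha m$ only if... so the source term must actually be $\alpha m \int_0^t$, not $\alpha m^2 \int_0^t$. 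Re-examining: $\frac{\alpha}{N^{2+\nu}}\int_0^{Nt} X_1^N(s)^2 \diff s$ divided by $N^{1-\nu}$ equals $\frac{\alpha}{N^{3}}\int_0^{Nt}X_1^N(s)^2\diff s = \frac{\alpha}{N^2}\int_0^t X_1^N(Ns)^2 \diff s \to \alpha \int_0^t m^2 \diff s$; combined with $\beta m \int_0^t y$ this is $y' = \alpha m^2 + \beta m y$, solved by $y(t) = \frac{\alpha m}{\beta}(e^{\beta m t}-1)$. Good — the constant $\alpha m^2/(\beta m) = \alpha m /\beta$ is exactly right. By uniqueness for this linear ODE the whole sequence converges, giving the claim. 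The main obstacle is the bootstrap: rigorously establishing the a priori bound $X_2^N(Ns) = O(N^{1-\nu})$ (equivalently $X_1^N(Ns)/N \to m$) uniformly on $[0,t]$ before one knows the limit — this is handled by a Gronwall-type argument on $\E(X_2^N(Nt))$ using~\eqref{eqa1} with the crude bounds $X_1^N \le M_N$, $X_2^N \le M_N$, which yields $\E(X_2^N(Nt)) \le C N^{1-\nu} e^{C' t}$ and closes the loop.
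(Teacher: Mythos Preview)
Your proposal is correct and is exactly the ``standard'' argument the paper gestures at: the paper's own proof merely invokes identities~\eqref{eqa1} and~\eqref{eqa2} together with conservation of mass, and your outline unpacks precisely those steps --- scaling the integral equation, killing the martingale via its bracket and Doob's inequality, tightness from the modulus of continuity, and identification of the limiting linear ODE $y'=\alpha m^2+\beta m y$. One minor arithmetic slip (the second piece of the bracket is $O(N^{-1+\nu})$, not $O(N^{-\nu})$) does not affect the conclusion.
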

\begin{proof}
The proof is standard by using the identities~\eqref{eqa1} and~\eqref{eqa2}, and the relation $X_1^N(t){+}X_2^N(t){=}M_N$. See Eug\`ene et al.~\cite{EXRD} for example. 
\end{proof}
The following lemma introduces a branching process which will be helpful to estimate the order of magnitude in $N$  of the lag time 
\[
T^N(\delta)=\inf\{t\geq 0: X_2^N(t)/M_N\geq \delta \},
\]
for $0<\delta<1$. 
\begin{lemma}
For $a$, $b>0$, let $(W_{a,b}^N(t))$ be a pure birth process with birth rate 
\[ \frac{a}{N^\nu}+  \frac{b}{N}x\] in state $x\in\N$, with $W(0)=0$ and $0<\nu\leq 1$.
 If
\[
\tau_{a,b}^N(\delta)\stackrel{\text{def.}}{=}\inf\left\{t>0: W_{a,b}^N(t)\geq \delta N\right\},
\]
then the sequence $({\tau_{a,b}^N(\delta)}/{(N\log N)})$ converges in distribution to $\nu/b$. 
\end{lemma}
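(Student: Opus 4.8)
The plan is to use that a pure birth process is, after the time-change by its successive jumps, nothing but a sum of independent exponential holding times. Setting $K_N=\lceil\delta N\rceil$ and denoting by $\lambda_k^N = a/N^\nu + bk/N$ the birth rate in state $k$, one has the identity in distribution
\[
\tau_{a,b}^N(\delta)\stackrel{\text{dist.}}{=}\sum_{k=0}^{K_N-1}E_k^N,
\]
where the $E_k^N$ are independent and $E_k^N$ is exponentially distributed with parameter $\lambda_k^N$. The whole statement then reduces to an $L^2$ estimate for this sum followed by an application of Chebyshev's inequality; convergence in probability to the constant $\nu/b$ of course implies the claimed convergence in distribution.

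For the first moment, writing $c_N = aN^{1-\nu}/b$,
\[
\E\bigl(\tau_{a,b}^N(\delta)\bigr)=\sum_{k=0}^{K_N-1}\frac1{\lambda_k^N}=\frac Nb\sum_{k=0}^{K_N-1}\frac1{c_N+k}.
\]
Comparing this harmonic-type sum with $\int_0^{K_N}\diff x/(c_N+x)=\log(1+K_N/c_N)$ — the error being $O(1/c_N)=O(1)$ since $\nu\le1$ keeps $c_N$ bounded below — and using $K_N/c_N=(\delta b/a)N^\nu(1+o(1))$, one gets
\[
\E\bigl(\tau_{a,b}^N(\delta)\bigr)=\frac Nb\Bigl(\log\bigl(1+(\delta b/a)N^\nu\bigr)+O(1)\Bigr)=\frac\nu b\,N\log N+O(N),
\]
hence $\E(\tau_{a,b}^N(\delta))/(N\log N)\to\nu/b$. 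For the variance, independence of the $E_k^N$ gives
\[
\Var\bigl(\tau_{a,b}^N(\delta)\bigr)=\sum_{k=0}^{K_N-1}\frac1{(\lambda_k^N)^2}=\frac{N^2}{b^2}\sum_{k=0}^{K_N-1}\frac1{(c_N+k)^2}\le\frac{N^2}{b^2}\Bigl(\frac1{c_N^2}+\frac1{c_N}\Bigr)=O\bigl(N^{2\nu}+N^{1+\nu}\bigr).
\]
Since $\nu\le1$ this is $o((N\log N)^2)$, so Chebyshev's inequality combined with the first-moment estimate yields $\tau_{a,b}^N(\delta)/(N\log N)\to\nu/b$ in probability.

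The computations above are elementary; the only point that genuinely deserves care is the contribution of the small values of $k$, where $\lambda_k^N$ is of order $N^{-\nu}$ rather than the ``bulk'' order $N^{-1}$. This crossover region, of size $O(N^{1-\nu})$, is precisely what produces the logarithmic time scale $N\log N$ and the constant $\nu/b$ (rather than $1/b$); one must check that it contributes only $O(N)$ to the mean and $O(N^{2\nu})$ to the variance, both negligible at the scale $N\log N$, and this is where the hypothesis $0<\nu\le1$ is used.
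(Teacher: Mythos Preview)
Your proof is correct and follows essentially the same approach as the paper: represent $\tau_{a,b}^N(\delta)$ as a sum of independent exponential holding times, estimate the mean to get $(\nu/b)N\log N + O(N)$, bound the variance by $O(N^2)$, and conclude via Chebyshev. The paper is terser (it writes only ``after some simple estimations'' for the mean and records the variance bound as $\Var(\tau_{a,b}^N(\delta)/N)\le\sum_{x\ge0}(aN^{1-\nu}+xb)^{-2}$), whereas you spell out the integral comparison and the role of $c_N=aN^{1-\nu}/b$; but the argument is the same.
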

As it can be seen $(W_{a,b}^N(t))$ is a branching process with immigration. Immigration rate is $a/N^{\nu}$ and the reproduction rate is given by $b/N$.  See Harris~\cite{Harris} for example. 
\begin{proof}
Let, for $x\in \N$, $E_x^N$ denotes an exponential random variable with parameter $a/N^\nu+x b/N$, assuming that the random variables $E_x^N$, $x\geq 0$ are independent, then clearly
\[
\tau_{a,b}^N(\delta)\stackrel{\text{dist}}{=} \sum_{x=0}^{\lfloor \delta N\rfloor} E_x^N.
\]
hence after some simple estimations
\[
\lim_{N\to+\infty} \frac{\E(\tau_{a,b}^N(\delta))}{N\log N}= \frac{\nu}{b}.
\]
In the same way, one checks that the sequence $(\Var(\tau_{a,b}^N(\delta)/N))$ is bounded 
\begin{equation}\label{var}
\Var \left(\frac{\tau_{a,b}^N(\delta)}{N} \right) \leq  \sum_{x=0}^{+\infty} \frac{1}{(a N^{1-\nu}+x b)^2}.
\end{equation}
The convergence in distribution follows , by using Chebishev's Inequality. 
\end{proof}
\begin{figure}[ht]
\centering
\includegraphics[scale=0.9]{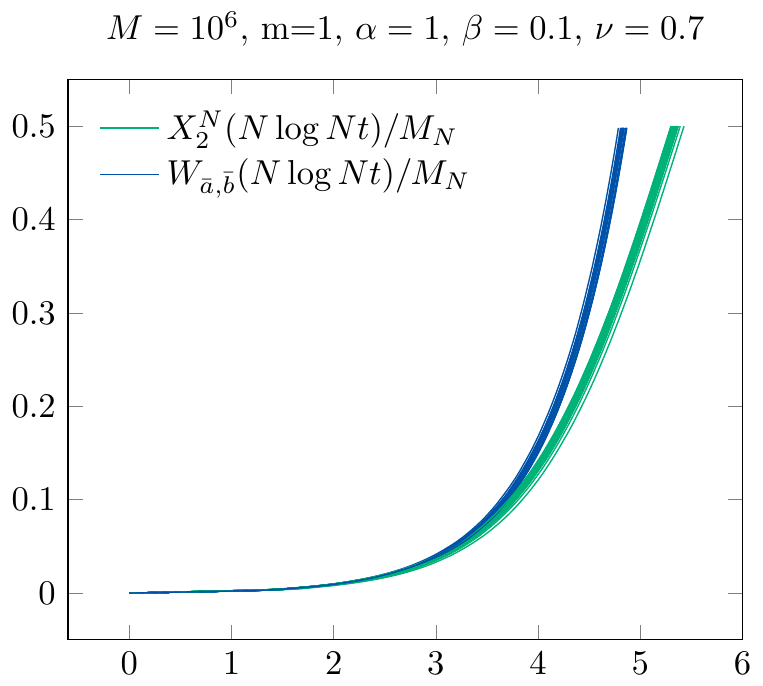}
\caption{In blue, $20$ simulations of $(W_{\bar{a},\bar{b}}/M_N)$ and in green, $20$ simulations of $(X_2^N/M_N)$  on the time scale $t\mapsto N \log N t$.}\label{Fig}
\end{figure}

Let $0<\delta<1$ and fix some $\underline{\kappa}<1<\overline{\kappa}$, one can assume that $N$ is sufficiently large so that  $\underline{\kappa}{\leq}M_N/(mN){\leq }\overline{\kappa}$ holds. 
Recall that $T^N(\delta)$ is the first time that the fraction of the number of polymerised monomers $X_2^N(t)/M_N$ is greater than $\delta$. 
The transition rates of $(X_2^N(t))$ are given by 
\begin{equation}\label{eqaux11}
 x \mapsto 
\begin{cases}
x{+}2\text{ at rate } \quad  \alpha/N^\nu\,[(M_N-x)/N]^2\\
x{+}1\phantom{ at rate } \quad  \beta\,x /N\times (M_N-x)/N.
\end{cases}
\end{equation}
By comparing the transition rates, we see that, for $x<\delta M_N$ one has 
\[
\begin{cases}
\alpha/N^\nu\,((M_N-x)/N)^2\ge \alpha/N^\nu\, (\underline{\kappa} m)^2(1-\delta)^2,\\ 
\beta\,x /N\times (M_N-x)/N \ge \beta \underline{\kappa} m (1-\delta). 
\end{cases}
\]
One can therefore construct a coupling such that, on the event $\{T^N(\delta)>t\}$, the relation $X_2^N(t){\geq} W_{\underline{a},\underline{b}}(t)$ holds with $\underline{a}{=}\alpha \underline{\kappa} m^2(1{-}\delta)^2$ and $\underline{b}{=}\beta \underline{\kappa} m(1{-}\delta)$. One obtains the relation $\tau_{\underline{a},\underline{b}}^N(\delta m){\geq}_{st} T^N(\delta)$, where $\geq_{st}$ denotes the stochastic order: if $U$ and $V$ are two real valued random variables
\[
U\geq_{st} V \text{ if } \P(V\geq x)\leq \P(U\geq x) \qquad \forall x\in\R.
\]
Since $X_2^N(t){\leq} 2 W_{\bar{a},\bar{b}}(t)$, with $\bar{a}{=}\alpha(\overline{\kappa}m)^2$ and $\bar{b}{=}\beta\overline{\kappa} m $, one has $\tau_{\bar{a},\bar{b}}^N(\delta m /2){\leq}_{st} T^N(\delta)$. One gets therefore
\begin{equation}\label{EQ}
\displaystyle\tau_{\bar{a},\bar{b}}^N(\delta m /2)\leq_{st} T^N(\delta) \leq_{st} \tau_{\underline{a},\underline{b}}^N(\delta m).
\end{equation}
Since the constants $\underline{\kappa}$ and $\overline{\kappa}$ can be chosen arbitrarily close to $1$, the following proposition has therefore been proved. 
\begin{proposition}[Order of Magnitude of Lag Time]
For $\delta>0$ and $ 0{<}\nu{\leq}1$, 
\[
\lim_{N\to+\infty}
\P\left(\frac{\nu}{\beta m }\leq \frac{T^N(\delta)}{N\log N}\leq \frac{\nu}{\beta m (1-\delta)}\right)=1.
\]
\end{proposition}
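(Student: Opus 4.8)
The plan is to read the proposition off the two-sided stochastic comparison~\eqref{EQ} together with the hitting-time asymptotics of the preceding lemma, and then to let the comparison constants $\underline\kappa$ and $\overline\kappa$ tend to $1$. The substantive work is already done: the coupling that squeezes $(X_2^N(t))$ between the two branching processes with immigration $W_{\bar a,\bar b}^N$ and $W_{\underline a,\underline b}^N$, and the identification $\tau_{a,b}^N(\cdot)/(N\log N)\to\nu/b$. What remains is to glue these together and to track the constants.

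First I would fix $\underline\kappa<1<\overline\kappa$ and, using $M_N/N\to m$, restrict to $N$ large enough that $\underline\kappa m\le M_N/N\le\overline\kappa m$, so that~\eqref{EQ} applies with $\underline b=\beta\underline\kappa m(1-\delta)$ and $\bar b=\beta\overline\kappa m$; the immigration constants $\underline a,\bar a$ play no role below. Rewriting the two stochastic orders in~\eqref{EQ} as inequalities between tail probabilities and dividing arguments by $N\log N$ gives, for all real $x,y$,
\[
\P\!\left(\frac{T^N(\delta)}{N\log N}>x\right)\ge\P\!\left(\frac{\tau_{\bar a,\bar b}^N(\delta m/2)}{N\log N}>x\right),\qquad
\P\!\left(\frac{T^N(\delta)}{N\log N}\le y\right)\ge\P\!\left(\frac{\tau_{\underline a,\underline b}^N(\delta m)}{N\log N}\le y\right).
\]
By the lemma the two right-hand normalized hitting times converge in distribution to the constants $\nu/\bar b$ and $\nu/\underline b$, hence in probability; here one also uses that the limit $\nu/b$ given by the lemma depends neither on the immigration rate $a$ nor on the precise target level (since $\log(cN)\sim\log N$ for any fixed $c>0$), only on the reproduction rate $b$. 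Thus for any $x<\nu/\bar b$ the first right-hand side tends to $1$ and for any $y>\nu/\underline b$ the second does; combining them, $\P(x<T^N(\delta)/(N\log N)\le y)\to1$ whenever $x<\nu/(\beta\overline\kappa m)$ and $y>\nu/(\beta\underline\kappa m(1-\delta))$. Letting $\overline\kappa\downarrow1$, so that $\nu/(\beta\overline\kappa m)\uparrow\nu/(\beta m)$, and $\underline\kappa\uparrow1$, so that $\nu/(\beta\underline\kappa m(1-\delta))\downarrow\nu/(\beta m(1-\delta))$, then gives the stated interval.

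I do not expect a genuine obstacle, since the two hard ingredients are assumed. The only point requiring a little care is the interchange of $N\to+\infty$ with the limits $\underline\kappa\uparrow1$, $\overline\kappa\downarrow1$: for each fixed pair $(\underline\kappa,\overline\kappa)$ one already has a convergence in probability, so it suffices to take sequences $\underline\kappa_k\uparrow1$, $\overline\kappa_k\downarrow1$ and extract a slowly growing diagonal subsequence --- equivalently, to observe that for every $\eta>0$ the displayed bounds put $T^N(\delta)/(N\log N)$ in $[\nu/(\beta m)-\eta,\ \nu/(\beta m(1-\delta))+\eta]$ with probability tending to $1$, the endpoints $\nu/(\beta m)$ and $\nu/(\beta m(1-\delta))$ being precisely the limiting values of the upper and lower comparison rates $\bar b$ and $\underline b$.
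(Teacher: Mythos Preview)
Your proposal is correct and follows exactly the paper's approach: the paper's entire proof of this proposition is the single sentence ``Since the constants $\underline{\kappa}$ and $\overline{\kappa}$ can be chosen arbitrarily close to $1$, the following proposition has therefore been proved,'' relying on the stochastic sandwich~\eqref{EQ} and the lemma's asymptotic $\tau_{a,b}^N(\cdot)/(N\log N)\to\nu/b$. If anything, you are more careful than the paper in spelling out how the stochastic order transfers to tail probabilities and in flagging the $\eta$-interchange issue at the endpoints.
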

\noindent 
{\bf Remark}. It is very likely that, to reach the state $\delta N$, only the second reaction has a real impact as soon as the variable $X_2^N$ is not $0$.  If true, simple calculations, as in the proof of the above lemma, would then give that the variable $T^N(\delta)/(N\log N)$ is converging in distribution to $\nu/(\beta m)$ as $N$ get large. Note that the limit in this asymptotic result does not depend on $\delta$ which suggests a sharp transition for the polymerisation process. 

The birth process $(W_{\bar{a},\bar{b}}(t))$ seems to be close to $(X_2^N(t))$  during the initiation of the polymerisation, as  the simulations of Figure~\ref{Fig}. 
This suggests that, for $\delta$ small,  the variables $\tau_{\bar{a},\bar{b}}^N(\delta m)$ and $T^N(\delta)$ are very close. We conclude this part by considering the case $\nu>1$. 
\subsection*{A Very Slow Nucleation Step}
Now we assume that $\nu>1$, in this regime, the first reaction, the nucleation step, is then significantly  slowed. 
\begin{proposition}
For any $\eps>0$ and $0<\delta<1$, there exist $0<K_1<K_2$ such that 
\[
\liminf_{N\to+\infty}
\P\left(K_1\leq \frac{T^N(\delta)}{N^\nu}\leq K_2\right)\geq 1-\eps.
\]
\end{proposition}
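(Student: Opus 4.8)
The plan is to reuse the branching-process comparison already developed for the case $0<\nu\leq 1$, but now track the dominant contribution of the nucleation reaction rather than the autocatalytic one. As before, fix $\underline{\kappa}<1<\overline{\kappa}$ and assume $N$ large enough that $\underline{\kappa}\leq M_N/(mN)\leq\overline{\kappa}$. For $x<\delta M_N$ the transition rates in~\eqref{eqaux11} are bounded above and below by
\[
\begin{cases}
\alpha/N^\nu\,\bigl((M_N-x)/N\bigr)^2\in\bigl[\alpha N^{-\nu}(\underline{\kappa}m)^2(1-\delta)^2,\ \alpha N^{-\nu}(\overline{\kappa}m)^2\bigr],\\
\beta\,x/N\times(M_N-x)/N\ \leq\ \beta\overline{\kappa}^2 m^2.
\end{cases}
\]
So one couples $(X_2^N(t))$ with pure-birth processes $(W^N_{a,b}(t))$ having birth rate $a/N^\nu+(b/N)x$ exactly as in the lemma: on $\{T^N(\delta)>t\}$ one has $W^N_{\underline a,\underline b}(t)\leq X_2^N(t)\leq 2W^N_{\overline a,\overline b}(t)$ with $\underline a=\alpha\underline{\kappa}^2 m^2(1-\delta)^2$, $\underline b=\beta\underline{\kappa}^2 m^2/\cdots$ and $\overline a=\alpha\overline{\kappa}^2 m^2$, $\overline b=\beta\overline{\kappa}^2 m^2$ (the exact constants are irrelevant, only that they are positive and finite). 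This yields, writing $\tau^N_{a,b}(\cdot)$ for the corresponding hitting times,
\[
\tau^N_{\overline a,\overline b}(\delta m/2)\leq_{st} T^N(\delta)\leq_{st}\tau^N_{\underline a,\underline b}(\delta m).
\]

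Thus it suffices to show that for any pure-birth process $(W^N_{a,b}(t))$ with $\nu>1$, the hitting time $\tau^N_{a,b}(\delta N)$, rescaled by $N^\nu$, is tight and bounded away from $0$ in probability: i.e. there exist $0<c_1<c_2$ with $\liminf_N\P(c_1\leq\tau^N_{a,b}(\delta N)/N^\nu\leq c_2)\geq 1-\eps'$. As in the lemma's proof, $\tau^N_{a,b}(\delta N)\stackrel{d}{=}\sum_{x=0}^{\lfloor\delta N\rfloor}E^N_x$ with $E^N_x$ independent exponential of parameter $a/N^\nu+xb/N$. For the lower bound, note the very first term $E^N_0$ alone is exponential with parameter $a/N^\nu$, so $E^N_0/N^\nu$ is exponential with parameter $a$, and $\P(E^N_0/N^\nu\geq c_1)=e^{-ac_1}$; choosing $c_1$ small makes this close to $1$, and $\tau^N_{a,b}(\delta N)\geq E^N_0$. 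For the upper bound, compute
\[
\E\!\left(\frac{\tau^N_{a,b}(\delta N)}{N^\nu}\right)=\frac{1}{N^\nu}\sum_{x=0}^{\lfloor\delta N\rfloor}\frac{1}{a/N^\nu+xb/N}
=\sum_{x=0}^{\lfloor\delta N\rfloor}\frac{1}{a+xbN^{\nu-1}}.
\]
Since $\nu>1$, the $x=0$ term contributes $1/a$ and for $x\geq 1$ the summand is at most $1/(bN^{\nu-1})$, so the whole sum is at most $1/a+\delta N/(bN^{\nu-1})=1/a+\delta N^{2-\nu}/b$, which is bounded in $N$ when $\nu\geq 2$ and is $o(1)$ bounded by a constant when $1<\nu<2$ only if $\nu\geq 2$; for $1<\nu<2$ one sees instead $\E(\tau^N/N^\nu)\sim\delta N^{2-\nu}/b\to\infty$ — so actually the right normalization interacts with $\nu$, and one should be slightly more careful: the $x=0$ term gives the $\Theta(1)$ contribution exactly when $\nu$ makes $N^\nu\gg N$, i.e. always for $\nu>1$, whereas the autocatalytic part sums to $\Theta(N^{2-\nu}/\text{bigger})$. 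Re-examining: $\sum_{x\geq1}1/(a+xbN^{\nu-1})\approx (1/(bN^{\nu-1}))\log(\delta N)\to 0$ for $\nu>1$. So in fact $\E(\tau^N_{a,b}(\delta N)/N^\nu)\to 1/a$, and a Markov-inequality bound $\P(\tau^N/N^\nu\geq c_2)\leq (1/a)/c_2$ gives the upper tail for $c_2$ large. Combining the two bounds and the coupling completes the proof.

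The main obstacle is the bookkeeping in the variance/expectation estimate to confirm that, for every $\nu>1$, the immigration (nucleation) term $E^N_0$ genuinely dominates the sum $\sum_{x\geq1}E^N_x$ after rescaling by $N^\nu$ — essentially checking that $\sum_{x=1}^{\delta N}\bigl(a/N^\nu+xb/N\bigr)^{-1}/N^\nu=\sum_{x=1}^{\delta N}\bigl(a+xbN^{\nu-1}\bigr)^{-1}=O\!\bigl(N^{-(\nu-1)}\log N\bigr)\to 0$, together with the analogous bound $\Var(\tau^N_{a,b}(\delta N)/N^\nu)=\sum_x(a+xbN^{\nu-1})^{-2}\to 1/a^2$, so that $\tau^N_{a,b}(\delta N)/N^\nu$ in fact converges in distribution to an exponential with parameter $a$. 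That last observation is stronger than the proposition needs but is the cleanest route: it immediately yields $K_1,K_2$ as, say, the $\eps/2$- and $(1-\eps/2)$-quantiles of the limiting exponential law, and transfers through the stochastic ordering~\eqref{EQ}-analogue above to $T^N(\delta)$.
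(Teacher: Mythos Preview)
Your approach is essentially the same as the paper's: reduce via the stochastic comparison~\eqref{EQ} to the hitting time $\tau^N_{a,b}(\delta)$ of the pure-birth process, then show $\tau^N_{a,b}(\delta)/N^\nu\to E^1_0/a$ (exponential with parameter $a$) because the remaining sum $\sum_{x\ge 1}(a+xbN^{\nu-1})^{-1}=O(N^{-(\nu-1)}\log N)\to 0$. Two remarks: first, you need not re-derive the coupling---Relation~\eqref{EQ} is already established in the paper and carries over verbatim to $\nu>1$; second, the paragraph where you first bound $\sum_{x\ge 1}(a+xbN^{\nu-1})^{-1}$ by $\delta N/(bN^{\nu-1})=\delta N^{2-\nu}/b$ and momentarily conclude the expectation diverges for $1<\nu<2$ is a false step that you immediately correct with the sharper harmonic-sum bound $\sim(\log N)/(bN^{\nu-1})$---in a clean write-up just go straight to the latter.
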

\begin{proof}
By using Relation~\eqref{EQ}, it is enough to derive a corresponding limit theorem for $\tau_{a,b}^N(\delta)/{N^{\nu}}$ for some $a{>}0$ and $b{>}0$. Let $(E^1_x)$ be a sequence of i.i.d. exponential random variables with parameter $1$, then 
\begin{equation}\label{bignu}
\frac{\tau_{a,b}^N(\delta)}{N^{\nu}}=  \sum_{x=0}^{\lfloor \delta N\rfloor} \frac{E^1_x}{a +x b N^{\nu-1}} = \frac{E^1_0}{a} +  \sum_{x=1}^{\lfloor \delta N\rfloor} \frac{E^1_x}{a +x b N^{\nu-1}}.
\end{equation}
The expected value of the last term of the right hand side of the above relation is bounded by $K\log(N)/N^{\nu-1}$ for some constant $K{>}0$. Consequently, this term becomes negligible in distribution for $N$ large.  One gets that the variable ${\tau_{a,b}^N(\delta)}/{N^{\nu}}$ converges in distribution to an exponential random variable. The proposition is proved. 
\end{proof}
As we have seen in the proof, the only term that matters in the series in Relation~\eqref{bignu} is the first one: the time to reach one polymerised monomer. It characterises  the order of magnitude of the lag time.  This variable has been analysed in Szavits-Nossan et al.~\cite{Szavits} and Yvinec et al.~\cite{yvinec2016}.

\medskip
{\bf Acknowledgments.} We thank W.F. Xue (University of Kent) for inspiring discussions. M. Doumic and S. Eug\`ene's research was supported by ERC Starting Grant SKIPPER$^{AD}$ No. 306321.

\providecommand{\bysame}{\leavevmode\hbox to3em{\hrulefill}\thinspace}
\providecommand{\MR}{\relax\ifhmode\unskip\space\fi MR }
\providecommand{\MRhref}[2]{%
  \href{http://www.ams.org/mathscinet-getitem?mr=#1}{#2}
}
\providecommand{\href}[2]{#2}

\end{document}